\newtheorem{theorem}{Theorem}
\newtheorem{proposition}{Proposition}
\newtheorem{lemma}{Lemma}
\newcommand\norm[1]{\left\lVert#1\right\rVert}
\newcommand{\Rd}{\mathbb{R}^d}
\newcommand{\Ast}[1]{\boldsymbol{A}_{\text{coor},#1}}
\newcommand{\bst}[1]{\boldsymbol{b}_{\text{coor},#1}}
\newcommand{\x}{\boldsymbol{\mathrm{x}}}
\newcommand{\X}{\boldsymbol{x}}
\newcommand{\Hm}{\frac{H_\epsilon}{M}  }
\begin{document}
\title{Service Placement in Small Cell Networks Using Distributed Best Arm Identification in Linear Bandits}
\author{  Mariam Yahya, Aydin Sezgin,  \textit{Senior Member, IEEE},  and Setareh Maghsudi\\
\thanks{This research was supported by the Federal Ministry of Research, Technology and Space (BMFTR) under Grants 16KIS2411 and the German Research Foundation (DFG) under Grant MA7111/6-1.

{Parts of this paper were presented at the European Signal Processing Conference (EUSIPCO 2025) \cite{yahyaEUSIPCO}. }

Mariam Yahya is with the Department of Computer Science, University of Tübingen, 72076 Tübingen, Germany  and with the Faculty of Electrical Engineering and Information Technology, Ruhr-University Bochum, 44801 Bochum, Germany (e-mail: mariam.yahya@uni-tuebingen.de).

Aydin Sezgin is with the Faculty of Electrical Engineering and Information Technology, Ruhr-University Bochum, 44801 Bochum, Germany 
(e-mail: aydin.sezgin@rub.de).

Setareh Maghsudi is with the Faculty of Electrical Engineering and Information Technology, Ruhr-University Bochum, 44801 Bochum, Germany 
(e-mail: setareh.maghsudi@rub.de).
}
}
\maketitle
\begin{abstract}
As users in small cell networks increasingly rely on computation-intensive services, cloud-based access often results in high latency. Multi-access edge computing (MEC) mitigates this by bringing computational resources closer to end users, with small base stations (SBSs) serving as edge servers to enable low-latency service delivery. However, limited edge capacity makes it challenging to decide which services to deploy locally versus in the cloud, especially under unknown service demand and dynamic network conditions.
To tackle this problem, we model service demand as a linear function of service attributes and formulate the service placement task as a linear bandit problem, where SBSs act as agents and services as arms. The goal is to identify the service that, when placed at the edge, offers the greatest reduction in total user delay compared to cloud deployment.
We propose a distributed and adaptive multi-agent best-arm identification (BAI) algorithm under a fixed-confidence setting, where SBSs collaborate to accelerate learning. Simulations show that our algorithm identifies the optimal service with the desired confidence and achieves near-optimal speedup, as the number of learning rounds decreases proportionally with the number of SBSs. We also provide theoretical analysis of the algorithm’s sample complexity and communication overhead. 
\end{abstract}

\begin{IEEEkeywords}
    Best arm identification (BAI), collaborative learning, linear bandits, multi-access edge computing (MEC) , service placement.
\end{IEEEkeywords}

\begin{table*}[b]
\centering
\begin{tcolorbox}[colback=gray!8,colframe=red,width=0.95\textwidth]
\footnotesize
Accepted for publication in IEEE Transactions on Mobile Computing. © 2026 IEEE. Personal use of this material is permitted. Permission
from IEEE must be obtained for all other uses, in any current or future media, including reprinting/republishing this material for advertising or
promotional purposes, creating new collective works, for resale or redistribution to servers or lists, or reuse of any copyrighted component of this
work in other works.
\end{tcolorbox}
\end{table*}
\section{Introduction}
The rapid advancements in mobile communication technologies have profoundly transformed our society and reshaped sectors such as entertainment \cite{khan2022survey}, healthcare \cite{liu2024energy}, and transportation \cite{li2024joint}. This transformation is accelerating with the widespread demand for 5G networks, which is expected to reach 6.3 billion users by 2030 \cite{ericsson2024}, thereby increasing the need for innovative and scalable network infrastructures. One promising technology is mobile cloud computing, which allows mobile devices to offload computationally intensive tasks to a cloud equipped with large computation and storage capabilities, supporting complex applications. However, the network cloud is often placed too far from the end users, creating high network delays and increased traffic. To mitigate these issues, multi-access edge computing (MEC) has emerged as a promising paradigm that places computation and storage resources at the network edge, closer to end users \cite{liang2017mobile}. 
While MEC can significantly reduce latency and improve quality of service, it also introduces new challenges in resource allocation and service placement due to the limited edge resources, dynamic user behavior, and randomness of the environment.

The service placement problem addresses the challenge of determining which services to place at the resource-limited network edges. The services are typically hosted at the cloud, with only a subset placed at the edges, closer to the end users. When a user requests a service, the edge server responds if it hosts it. Otherwise, the task is offloaded to the cloud or other edge servers \cite{malazi2022dynamic,ouyang2019adaptive}. The benefits of effective service placement include reducing user-perceived delay, minimizing backhaul traffic, lowering service costs \cite{malazi2022dynamic}, or meeting budget constraints \cite{chen2019budget}. All these benefits depend largely on the demand (or popularity) of the services, which is not known a priori, making the problem of service placement quite challenging. 

Some studies make the simplifying assumption that service demand, or its average value, is either known or can be directly estimated from recent observations \cite{chen2025multi,farhadi2021service,yang2019cloudlet}. However, this assumption does not accurately capture real-world service demand dynamics. To address this challenge, multi-armed bandit (MAB) approaches have been employed to learn service demand in an online manner \cite{chen2019budget, chen2018spatio, he2020bandit}. MAB is a sequential decision-making framework in which, in each round, the learner selects one of $K$ arms (options) and receives a reward drawn from an unknown but fixed distribution. The most common MAB objective is to maximize cumulative reward or, equivalently, minimize cumulative regret \cite{lattimore2020bandit}. Another key objective is to identify the arm that maximizes expected reward, a problem known as best arm identification (BAI), which falls under the pure exploration paradigm \cite{bubeck2009pure}.

The contextual MAB (CMAB) framework has been applied to service placement as it leverages contextual information, such as service attributes or network conditions  to enhance learning efficiency \cite{chen2019budget, ouyang2019adaptive,chen2018spatio}. Most existing works using standard MABs \cite{wang2023online} or CMABs \cite{chen2019budget,ouyang2019adaptive } in service placement focus on minimizing cumulative regret, which requires balancing exploration (gathering information about new services) and exploitation (selecting the best-known service). While this approach is effective in dynamic settings where cost is measured by the obtained reward \cite{ouyang2019adaptive,wang2023online}, it is less suitable for service placement problems, where the goal is to identify and deploy the best service for extended periods, often lasting hours \cite{chen2019budget}. In such cases, the BAI framework is more appropriate, as it allows for an initial exploration phase to identify the best service, followed by long-term deployment.

In this paper, we consider a small cell network consisting of a number of small base stations (SBSs) with computational resources, representing the edge servers in a MEC architecture. The network also includes a macro base station (MBS) that provides wide-area coverage for all users, forwards user service requests to the cloud, and facilitates information exchange among SBSs.
We propose a distributed and adaptive multi-agent BAI algorithm in linear bandits that enables the SBSs to collaborate in identifying the best service placement with the desired confidence. 
Our approach reduces the number of learning rounds proportionally to the number of SBSs.

\subsection{Paper Contributions}
\label{sec:contributions}
In this paper, we propose a distributed and adaptive multi-agent BAI algorithm in linear bandits. We then formulate the optimal service placement problem at the SBSs as a collaborative multi-agent BAI problem. Due to resource limitations at the SBSs, they can host only one service, while the remaining services are hosted in the cloud. Our objective is to identify the service that achieves the highest reduction in the total users' delay when placed at the SBSs instead of the cloud.

Since the delays and demand for different services is unknown, the SBSs collaborate to learn the delay improvement sequentially by leveraging contextual information that reflects the underlying structure of the bandit problem. Each SBS adapts its decisions according to past observations, enabling more informed actions in subsequent rounds. To accelerate this learning process, SBSs share data derived from their selected services and the corresponding observed rewards via the MBS. Because all SBSs access the same set of services and have the same bandit model, this collaboration significantly improves the overall learning efficiency. However, exchanging information across SBSs at every round can lead to excessive communication costs. To mitigate this, SBSs exchange information only when there is a significant change in their locally observed data, rather than in every round.

In summary, the contributions of this paper are as follows:  
\begin{itemize}
    \item We propose a distributed and fully adaptive algorithm for multi-agent best arm identification in the linear bandit setting under a fixed-confidence framework.  
    \item We formulate the service placement problem in small cell networks as a contextual linear bandit problem, aiming to identify the service that, when placed at the SBSs, maximizes the reduction in the total user-perceived delay compared to cloud deployment.
    \item To the best of our knowledge, this is the first work to apply the BAI setting in linear bandits to a MEC problem, including service placement. We employ our proposed distributed BAI algorithm to determine the optimal service to deploy at the SBSs, leveraging collaboration among them to minimizes the number of learning rounds required by each SBS.  
    \item We establish theoretical guarantees on the sample complexity and the number of communication rounds of the proposed algorithm. 
    \item We analyze the robustness of the proposed algorithm to communication failures and show how it can be extended to heterogeneous agents.
    \item Numerical results demonstrate the effectiveness of our proposed algorithm in identifying the best arm (service) on both synthetic data and in the service placement problem. The algorithm identifies the best arm with the specified confidence while accelerating the learning process by nearly $M$-fold compared to independent learning by individual SBSs, where $M$ denotes the number of SBSs (agents).
\end{itemize}
\subsection{Paper Organization}
The remainder of this chapter is organized as follows. Section~\ref{sec:related_work} presents the related work. Section~\ref{sec:system_model} introduces the system model for service placement in small-cell networks, while Section~\ref{sec:problem_formulation} formally defines the problem. Section~\ref{sec:modeling} formulates the service placement problem as a distributed BAI problem and demonstrates how the proposed algorithm addresses it. In Section~\ref{sec:DistLinGapE}, we present our distributed BAI algorithm within the linear bandit framework. Section~\ref{sec:extension} shows how the proposed work can be extended or applied to 
heterogeneous settings and the limitations, while~Section~\ref{sec:robustness} analyzes the algorithm's robustness to communication failures. Section~\ref{sec:numerical_results} provides numerical results, and Section~\ref{sec:conclusion} concludes the chapter and discusses future research directions.

\section{Related Work}
\label{sec:related_work}

\subsection{BAI in Linear Bandits}
Linear bandits (LB) are a variant of MABs in which the learner observes a $d$-dimensional context vector for each selected arm and observes a reward that is a noisy linear function of the context \cite{abbasi2011improved}. In this setting, the learner aims to estimate the unknown parameter vector that defines the linear relationship between the context and the expected rewards. The shared linear structure across arms enables the agent to learn more efficiently compared to standard unstructured MABs, where the mean rewards of the arms are estimated independently. The standard stochastic MAB is a special case of LB when the set of available actions in each round is the standard orthonormal basis of $\Rd$.  

In BAI, the learner sequentially samples arms with the objective of ultimately recommending a single best arm, that is, the one with the highest expected reward \cite{bubeck2009pure}. BAI can be studied under two primary settings: fixed-budget and fixed-confidence. In the fixed-budget setting, the goal is to minimize the probability of misidentifying the best arm within a predefined number of rounds \cite{audibert2010best}. In the fixed-confidence setting, the objective is to identify the best arm with a specified level of confidence while minimizing the number of rounds required. In this setting, identification strategies are characterized by three fundamental components: the arm selection strategy, the stopping condition, and the arm recommendation rule \cite{kaufmann2016complexity}.

The strategies for BAI in LBs differ from those for conventional MABs because of the correlation between the arms. In LBs, pulling a suboptimal arm can still improve the estimation of the underlying linear model. The first algorithm for BAI in linear bandits appeared in \cite{hoffman2014correlation}, but it is designed for the fixed budget case and does not take  the complexity of the linear structure into account. Soare et al. \cite{soare2014best} were the first to address the BAI problem in LB under the fixed-confidence setting. They proposed the $\mathcal{XY}$-Static algorithm, which uses optimal experimental design for arm selection but suffers from high sample complexity due to its inability to adapt to the outcomes of the previous rounds. To enhance efficiency, they introduced the $\mathcal{XY}$-Adaptive strategy, which divides the total number of rounds into multiple phases. In each phase, a static allocation algorithm is used, but the strategy leverages the outcomes of previous phases to identify important directions and eliminate less promising ones. This approach can be considered semi-adaptive, as it does not adjust arm selection based on the results of each round. 
In contrast, Xu et al. \cite{xu2018fully} introduced a fully adaptive algorithm, named Linear Gap-based Exploration (LinGapE), that dynamically adjusts the arm selection at each round based on all past observations, thus improving the sample complexity.

The standard MAB problem is extended to the distributed case where agents collaboratively maximize the cumulative regret \cite{wang2019distributed, shi2021federated,chen2023federated,he2022simple} or identify the best arm for the pure exploration case \cite{mitra2021exploiting,tao2019collaborative}. The agents can access the same set of arms \cite{shi2021federated, mitra2021exploiting,huang2021federated} or different subsets of arms  \cite{chen2023federated}, and in most cases they exchange information through a central coordinator. In addition to the objective of minimizing the cumulative regret or minimizing the sample complexity, distributed algorithms aim to minimize the communication cost whether it is measured in terms of the number of communication rounds \cite{mitra2021exploiting} or the amount of real numbers or bits transmitted \cite{wang2019distributed, amani2023distributed,huang2021federated}. Meeting both objectives is challenging because there is a tradeoff between the number of local updates and the number of communication rounds. In elimination-based algorithms, for example, reducing local rounds can result in the elimination of the optimal arm, while reducing communication can increase the sample complexity. 

In the literature, studies on distributed BAI in linear bandits for the fixed-confidence setting are still limited. In this paper, we propose a distributed and adaptive multi-agent BAI algorithm in LB for the fixed confidence setting. The proposed algorithm is named Distributed Linear Gap-based Exploration (DistLinGapE) as it mainly extends the work in \cite{xu2018fully} to the distributed setting. We also derive the sample complexity for the proposed algorithm and the upper bound on the number of communication rounds.
\subsection{The Service Placement Problem}
MEC problems are uncertain due to the randomness of the environment, user behavior, and the availability of network resources. Consequently, MAB-based strategies have been extensively applied in this domain \cite{chen2019budget, ouyang2019adaptive, chen2018spatio, yahya2024decentralized,zhang2017contextual}. In particular, the service placement problem in MEC is challenging due to the unknown service demand. To address this, several works have modeled it as a contextual bandit problem \cite{chen2019budget, ouyang2019adaptive, chen2018spatio}, with the modeling approach and solution strategy varying depending on the placement objectives and constraints. In \cite{chen2019budget}, a service provider rents computational resources at edge servers to host application services, so it uses a CMAB approach to find the best placement that improves the users' QoS within the budget requirements. CMAB is also used in \cite{chen2018spatio} to find the optimal placement of a subset of services under a budget constraint first for non-overlapping SBSs, and then for overlapping coverage areas, adding difficulty to the problem as the service can be requested from multiple SBSs.

The work in \cite{ouyang2019adaptive} considers the service placement problem from the users' perspective with the objective of jointly optimizing the user’s perceived latency and service migration cost, weighted by user preferences. The problem is modeled as a CMAB problem where the context vector describes the users' state information such as  location and service demand, and is solved using a Thompson-sampling-based algorithm. Contextual bandits are also used in content caching problems \cite{zhang2017contextual}; these problems differ from service placement, as they are concerned with accessing the cached data without processing.

Services deployed at the network edge typically remain in place for extended periods. Farhadi et al. \cite{farhadi2021service} propose a two-time-scale framework for service placement and request scheduling. In their approach, service placement decisions are made on a larger time scale, in terms of frames, while request scheduling is performed on a per-slot basis, depending on the real-time resource availability at each edge server. Their model assumes that the average demand for each service is known. This two-time-scale framework can be particularly beneficial for energy-constrained devices, such as unmanned aerial vehicles (UAVs), where minimizing service caching overhead is critical \cite{zhou2022two}. These works highlight the importance of efficient long-term placement strategies and reinforce the relevance of our contribution, which uses BAI for long-term deployment while addressing collaboration and unknown service demand.

\subsection{Algorithmic Choice for Service Placement}
Service placement decisions are made on a slower timescale than request scheduling. Consequently, the objective is to identify, with high confidence, the service to be placed locally  based on its long-term performance, which motivates a BAI formulation in linear bandits.
Specifically, we adopt the LinGapE algorithm \cite{xu2018fully} as a fully adaptive arm selection approach and extend it to a collaborative multi-agent setting. Among existing BAI methods for linear bandits \cite{jedra2020optimal,jourdan2022choosing,azizi2023meta}, some provide strong theoretical guarantees and, in some cases, improved sample complexity. However, they often incur higher computational overhead or rely on assumptions that limit their applicability in distributed service placement systems. In contrast, LinGapE combines adaptivity with moderate computational complexity, making it well suited to our problem.

For example, the Lazy Track-and-Stop algorithm \cite{jedra2020optimal} is asymptotically optimal, achieving the optimal sampling rate as the error probability $\delta \rightarrow 0$. However, for moderate values of $\delta$, LinGapE can be more sample-efficient \cite{tirinzoni2022elimination}. Moreover, Track-and-Stop requires solving a sample allocation optimization problem at each round, which results in significant computational overhead which motivated the introduction of a lazy variant. In our service placement problem, some level of error is acceptable and $\delta$ need not approach zero, especially when doing so incurs additional computational cost.

Other approaches focus on $\epsilon$-best-arm identification, to identify any arm whose performance is within $\epsilon$ of the optimal one \cite{jourdan2022choosing}. Although this relaxation can reduce sample complexity, extending $\epsilon$-BAI to a distributed multi-agent setting is considerably more challenging as agents may observe different services as locally $\epsilon$-optimal. This requires global consensus on arm elimination and coordinated sampling strategies across agents, thus increasing communication and problem complexity.

Meta-learning-based methods \cite{azizi2023meta} improve regret performance by transferring knowledge across multiple tasks under a fixed budget. This differs from our work, where agents collaboratively learn within a single bandit instance and jointly estimate the same underlying parameter vector over time. 

Overall, LinGapE provides a balance between statistical efficiency, computational complexity, and scalability. Its extension to a distributed multi-agent framework enables cooperative service placement while maintaining a good tradeoff between learning speed and communication cost, which is essential for practical  deployments.

\section{System Model}
\label{sec:system_model}
We first introduce the notation used throughout this work. 
\textbf{Notations:} We use boldface lowercase letters and boldface uppercase letters to represent vectors and matrices respectively. We use $[K]$ to represent a set of integers $\{1, \dots, K\}$. Besides, $\norm{\boldsymbol{x}}_p$ denotes the $p-$norm of a vector $\boldsymbol{x}$ and the weighted $2-$norm of vector $\boldsymbol{x}$ is defined by $\norm{\boldsymbol{x}}_{\boldsymbol{A}} = \sqrt{\boldsymbol{x}^T\boldsymbol{A}\boldsymbol{x}}$. Furthermore, we use $\mathbb{P}[\cdot]$ to denote the probability of the term inside the brackets.

We consider a system comprising an MBS connected to the network cloud and $M$ homogeneous SBSs. Each SBS serves as an edge server and provides communication coverage to users within its own non-overlapping small cell, resulting in $M$ coverage areas of similar size. The cloud is equipped with large storage and computational resources, enabling it to host and process all user service requests forwarded by the MBS. The MBS provides wide-area coverage, collects user service requests, and relays them to the cloud. In contrast, the SBSs have significantly limited computational capabilities. They are connected to the MBS in a star-shaped network topology to enable information exchange. The system model is illustrated in~Fig \ref{fig:system_model}.

A service provider manages $K$ distinct services that can be deployed either at the cloud or at the SBSs, depending on the SBSs' decisions. Each service $k \in [K]$ is characterized by a $d$-dimensional context vector $\boldsymbol{x}_k \in \Rd$. This vector can represent various factors such as the service type, computational and memory requirements, and geographic or time-based demand. 

Here, our optimization problem depends on both the service demand and delay improvement of placing the service in the SBS instead of the cloud. Therefore, we construct the context vector as
\begin{equation}
    \X_k = \tilde{G}_k \tilde{\boldsymbol{D}}_k ,
\end{equation}
where $\tilde{G}_k$ is the average delay improvement achieved when service $k$ is placed at the SBS instead of the cloud for one request, and $\tilde{\boldsymbol{D}}_k \in \Rd$ is the long-term average service demand over $d$ consecutive time periods.
The duration of the time periods is selected based on the temporal dynamics of the network, where shorter periods allows the context to adapt to highly dynamic environments, while longer periods are more suitable for networks with stable or slowly varying demand.
In our numerical results, $\tilde{\boldsymbol{D}}_k$ contains the users' average demand for service $k$ over 8 time periods, representing a four-minute interval.

Due to limited edge resources, each SBS $m \in [M]$ can host only one service at any given time. If a user within the coverage area of SBS $m$ requests a service that is unavailable at SBS $m$, it has to offload the request to the cloud via the MBS, incurring additional delay. The decision of which service to deploy at the SBSs depends on the objective of this placement. In this work, we aim to identify the service that reduces the total users' delay the most when placed at the SBSs instead of the cloud. The key challenge arises from the uncertainty in delays and service demand, which are not known in advance.

The service placement process operates in discrete time steps indexed by $t = 1, 2, \dots$. At each $t$, each SBS selects a service $a_{t,m} \in [K]$ with vector $\X_{a_{t,m}}$ and observes the resulting delay.
To facilitate learning the delay improvement and effectively model the dependence of the service demand on the contextual information, we adopt the common approach of representing service demand as a noisy linear function of the context \cite{yang2018content, zhang2017contextual}. Mathematically, the demand for service $a_{t,m}$ at time $t$ and SBS $m$ is given by  
\begin{equation}
p_{t,m} = \tilde{\boldsymbol{D}}_{a_{t,m}}^\top \boldsymbol{\omega^*} + \xi_{t,m},
\label{eq:context_demand}
\end{equation}
where $\boldsymbol{\omega}^* \in \mathbb{R}^d$ is an unknown parameter vector representing the underlying relationship between services' features and the user demands, and $\xi_{t,m}$ is the zero-mean $R_{\xi}$ sub-Gaussian noise.

The users' delay depends on whether the requested service is placed at the SBS or the cloud, as detailed below.
\begin{figure}
    \centering
    \includegraphics[width=1\linewidth]{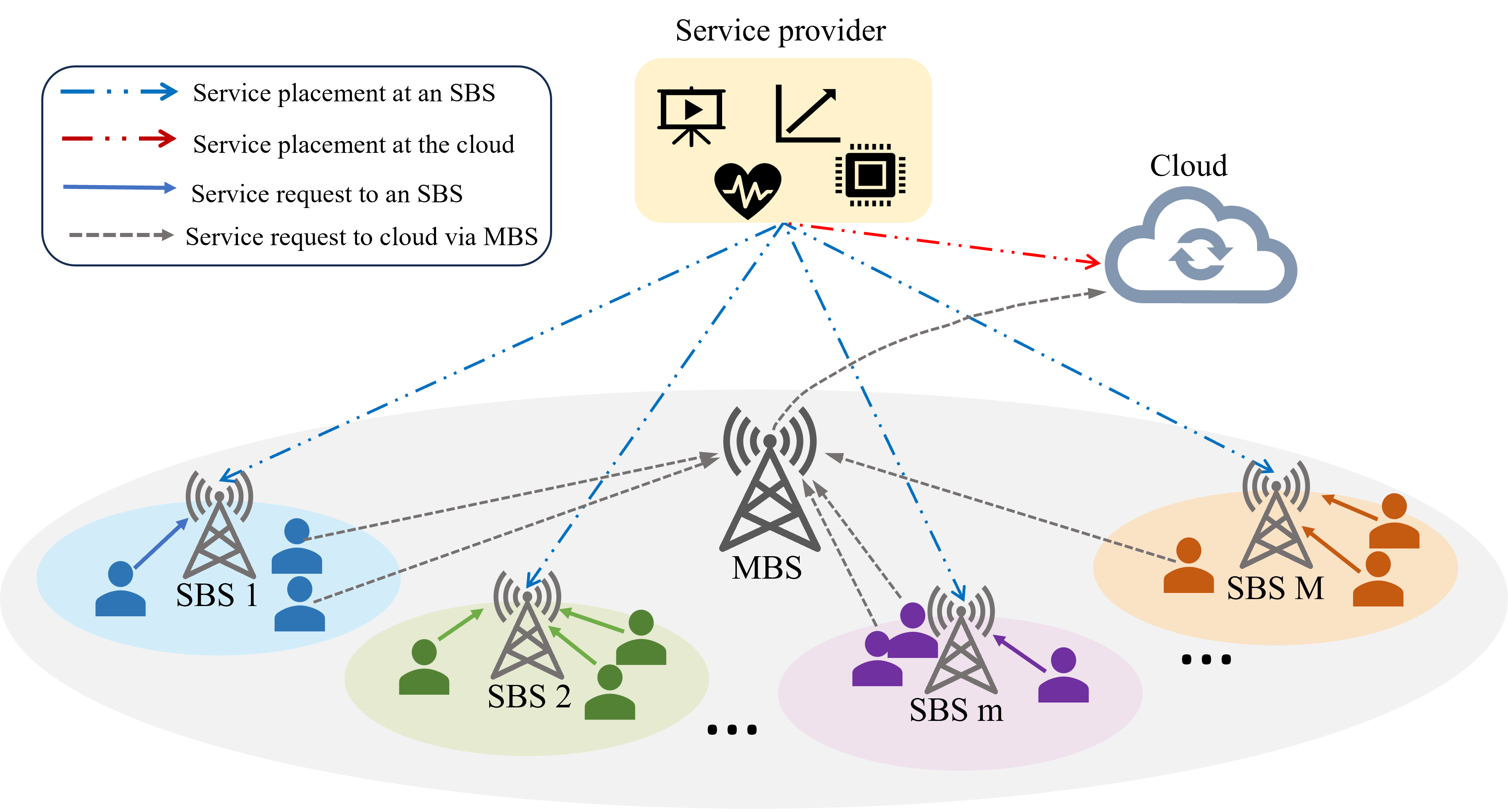}
    \caption{System model}
    \label{fig:system_model}
\end{figure}
\subsection{Service Delay at an SBS}
If a service is placed at an SBS, the service delay consists of the transmission delay of tasks from the users to the SBS and the processing delay at the SBS. We assume that the task response delay is negligible \cite{chen2018spatio}. For the uplink delay, we use the average data rate across all users because we are concerned with long-time deployment, and the instantaneous user data rates are unknown. The uplink transmission rate to an SBS, averaged across the users, is 
\begin{equation}
    \rho_{t,m} = W \log \left( 1+ \frac{P h_{t,m}}{I+\sigma_{\text{N}}^2 } \right) ,
    \label{eq:datarate}
\end{equation}
where $W$ is the uplink channel bandwidth, $P$ is the users' transmission power, $h_{t,m}$ is the channel gain at time $t$, $I$ is the interference power, and $\sigma_{\text{N}}^2$ is the noise power.  The round trip delay is $RTT_{t,m}$.
Let $s_k$ be the size of task $k$.  Then, the communication delay of task $k$ at SBS $m$ is given by $ d_{t,m}^{\text{cm}}(k)=\frac{s_k}{\rho_{t,m}}+RTT_{t,m}$ . 
The processing delay at SBS $m$ is given by $ d_{t,m}^{\text{proc}}(k) = \frac{c s_k}{f_{t,k,m}}$, where $f_{t,k,m}$ is the CPU capacity assigned to service $k$ at SBS $m$ and time $t$, and $c$ is the number of CPU cycles required per bit. Thus, the total delay for service $k$ at SBS $m$ and time $t$ is:
\begin{equation}
    d_{t,m} (k)= d_{t,m}^{\text{cm}} (k)+ d_{t,m}^{\text{proc}} (k)=  \frac{s_k}{\rho_{t,m}} +RTT_{t,m}+ \frac{c s_k}{f_{t,k,m}}.
\end{equation}
\subsection{Service Delay at the Cloud}
If the service requested by users in the service area of SBS $m$ is not hosted at the SBS, the requests are offloaded to the cloud via the MBS, incurring transmission and processing delays. Let $\rho_{t,0}$ be the average data rate between the users and the MBS, where the index $0$ is used for the cloud, indicating that the service request is relayed there. The wireless transmission delay is then given by $\frac{s_k}{\rho_{t,0}}$. Considering the delay caused by congestion and the routing policies in the backbone network between the MBS and the cloud server, the transmission delay over the backbone, with data rate $\rho_t^{\text{b}}$, is $\frac{s_k}{\rho_t^{\text{b}}}$. Let $RTT_{t,0}$ denote the round trip delay. The overall communication delay is therefore $d_{t,0}^{\text{cm}} (k)= \frac{s_k}{\rho_{t,0}} + \frac{s_k}{\rho_t^{\text{b}}}+ RTT_{t,0}$. The processing delay at the cloud is given by $d^{\text{proc}}_{t,0} (k)= \frac{c s_k}{f_{t,k,0}}$, where $f_{t,k,0}$ is the CPU capacity allocated for service $k$ at time $t$. The total service delay of task $k$ at the cloud is thus given by
\begin{equation}
    d_{t,0}(k) =  d_{t,0}^{\text{cm}} (k)+ d_{t,0}^{\text{proc}}(k).
\end{equation}
%
\subsection{Service Placement Utility}
The utility of a service placement decision is defined as the improvement in the total user delay resulting from placing the service at the SBSs instead of the cloud. Let $a_{t,m} \in [K]$ be the service placed at SBS $m$ at time $t$.
The average delay gain per request when service $k$ is placed locally is modeled by the random variable $G_{t,m}(k)$, defined as
$G_{t,m}(k)
\coloneqq
\mathbb{E}_u\!\left[d_{t,0}^u(k) - d_{t,m}^u(k)\right],$
where the expectation is taken over the users requesting service $k$.

If $p_{t,m}(a_{t,m})$ users request service $a_{t,m}$ at time $t$, the corresponding total delay improvement is
$U_{t,m}(a_{t,m})
=
G_{t,m}(a_{t,m}) \, p_{t,m}(a_{t,m}).$

Both the average delay gain per request $G_{t,m}(k)$ and the service demand $p_{t,m}(k)$ are random. Since service placement decisions operate on a slow timescale, we approximate $G_{t,m}(k)$ by its long-term average
\begin{equation}
\tilde{G}_k \coloneqq \mathbb{E}\!\left[G_{t,m}(k)\right],
\end{equation}
which is consistent with long-term service placement decisions and the use of the linear demand model. Substituting the demand model into the utility function, we obtain
\begin{equation}
    U_{t,m} (a_{t,m}) \approx \tilde{G}_{a_{t,m}} \left( \tilde{\boldsymbol{D}}_{a_{t,m}}^\top \boldsymbol{\omega}^* + \xi_{t,m}  \right) = \X_{a_{t,m}}^\top \boldsymbol{\omega}^* + \eta_{t,m}
\end{equation}
where $\boldsymbol{x}_{a_{t,m}} \coloneqq  \tilde{G}_{a_{t,m}}\tilde{\boldsymbol{D}}_{a_{t,m}}$ and $\eta_{t,m} \coloneqq  \tilde{G}_{a_{t,m}}\xi_{t,m}$ is an independent, zero-mean, $R$-sub-Gaussian noise variable with
$R =\max_{k \in [K]}| \tilde{G}_k| \, R_\xi$.

Therefore, the instantaneous delay improvement observed by SBS $m$ can be expressed as a noisy linear reward
\begin{equation}
   r_{t,m}(a_{t,m}) = \boldsymbol{x}_{a_{t,m}}^\top \boldsymbol{\theta}^{*} + \eta_{t,m}
   \label{eq:reward},
\end{equation}
where $\boldsymbol{\theta}^*$ is an unknown parameter vector used to obtain the linear bandit formulation. Since the expected total delay improvement satisfies
$
\mathbb{E}\!\left[r_{t,m}(a_{t,m}) \mid \mathcal{F}_{t-1}, a_{t,m}\right]
= \boldsymbol{x}_{a_{t,m}}^\top \boldsymbol{\theta}^{*},
$
where $\mathcal{F}_{t-1}$ denotes the filtration up to time $t-1$, the same parameter vector governs the demand model and the reward model. Hence, for notational convenience, we write $\boldsymbol{\theta}^* = \boldsymbol{\omega}^*$.

%
\section{Problem Formulation}
\label{sec:problem_formulation}
Given the limited computation and storage resources at the SBSs, they aim to identify and deploy the service that yields the greatest reduction in the total users' delay compared to cloud deployment. Since the deployed service is typically used over an extended period, making an optimal selection with high confidence is crucial. Therefore, we formulate this problem as one where the SBS must identify the service that maximizes the expected utility with high confidence while minimizing the number of decision-making rounds. Let
\begin{equation}
    a^{*}  =  \underset{a \in [K]}{\arg \max} \, \boldsymbol{x}_{a}^\top \boldsymbol{\theta}^{*}
\end{equation}
be the optimal service that maximizes the expected utility. We aim to identify the estimated best service $\hat{a}_{m}^{*}$ such that
\begin{equation}
    \mathbb{P}\left[(\boldsymbol{x}_{a^{*}}  - \boldsymbol{x}_{\hat{a}_{m}^{*}} )^{\top}
    \boldsymbol{\theta}^{*}  > \epsilon  \right] \le \delta \label{eq:stop_cond}
\end{equation}          
as fast as possible. Here, $\epsilon$ is the desired accuracy and $\delta$ is the confidence. \footnote{Since we aim to identify the best arm (service), we assume that $\epsilon =0$, but the work remains valid as an $(\epsilon, \delta)$-probably approximately correct (PAC) algorithm where $\epsilon>0$.} Since all SBSs access the same set of services and share the same vector $\boldsymbol{\theta}^{*}$, there is a single optimal service for all SBSs, i.e., $\hat{a}^{*}=\hat{a}_{m}^{*}, \forall m \in [M]$. A key challenge in this problem arises from the uncertainty in service demand, which is unknown a priori.
\section{Modeling Service Placement as a Distributed BAI Problem}
\label{sec:modeling}
In this section, we demonstrate that the service placement problem can be modeled as a distributed BAI problem within the linear bandit framework. We then solve it using the proposed DistLinGapE algorithm, presented in Section~\ref{sec:DistLinGapE}. 

Each SBS $m \in [M]$ acts as an agent in the bandit problem, while each service $k \in [K]$ corresponds to an arm associated with a context vector $\X_k \in \Rd$ characterizing the service. Selecting an arm in the bandit framework corresponds to an SBS $m$ hosting a service. Observing the reward at time $t$ involves observing the total reduction in delay resulting from this service placement among all users in the coverage area of SBS $m$. The MBS acts as a central coordinator, facilitating information exchange among the SBSs.

The best service can be identified by iteratively learning the parameter vector $\boldsymbol{\theta}^{*}$ using the BAI algorithm. Since all SBSs share the same vector $\boldsymbol{\theta}^{*}$ and set of services, and can communicate through the MBS, collaboration between the SBSs speeds up the identification process compared to independent learning by individual SBSs. 

At a high level, the process of identifying the best service placement proceeds as follows. At time $t$, each SBS $m$ hosts a service $a_{t,m}$ and observes the corresponding reward (utility). This reward is used to update the estimate of the unknown parameter vector, denoted by $\hat{\boldsymbol{\theta}}_{t,m}$, which guides the next service placement decision.  To minimize the communication overhead, each SBS continues hosting services and updating its local information until a significant change is detected at one of the SBSs. When such a change occurs, a communication round is triggered, and the SBSs share their observations via the MBS. This process repeats until the best service is identified. Section~\ref{sec:DistLinGapE} provides a detailed description of the proposed algorithm. To facilitate the explanation of the algorithm and the subsequent analysis, we will use the bandit terminology. The application to the small cell network follows simply by mapping each agent to an SBS and each arm to a service.
\section{The DistLinGapE Algorithm}
\label{sec:DistLinGapE}
We consider a distributed LB problem  with $M$ agents, where $m \in [M]$, that can exchange updates through a central coordinator. All agents have access to the same set of $K$ arms, where each arm $k \in [K]$ is associated with a $d$-dimensional context vector $\boldsymbol{x}_k \in \Rd$, with $\| \boldsymbol{x}_k \| \le L$ for some constant $L$. In round $t$, agent $m$ pulls arm $a_{t,m} \in [K]$ and receives a reward that is a noisy linear function of the context vector as given by \eqref{eq:reward}, where $\| \boldsymbol{\theta}^{*} \| \le S$. The aim of our distributed algorithm is to enable the agents to collaboratively learn $\boldsymbol{\theta}^{*}$ to identify the best arm satisfying \eqref{eq:stop_cond} by sharing observations over multiple communication rounds. Specifically, we aim to find the best arm with high confidence in the fewest number of rounds and a low communication overhead. The DistLinGapE algorithm is given in Algorithm~\ref{alg:DistLinGapE} on the next page.

Before discussing the algorithm, we give a high-level explanation of how the best arm is identified in linear bandits.
Soare et al. \cite{soare2014best} show that for each arm there exists a cone defining the set of $\boldsymbol{\theta}$ for which arm $k$ is optimal, given by  $\mathcal{C}(k) = \{\boldsymbol{\theta} \in \Rd, k \in \arg \max_{k \in [K]} \boldsymbol{x}^\top_k \boldsymbol{\theta} \}$. The optimal arm is identified when the confidence set of the estimate $\hat{\boldsymbol{\theta}}_{t,m}$ shrinks into one of the cones. The illustration in Fig.~\ref{fig:conf_cone} gives an example with $K=3$ arms and $d=2$. 
The $\mathbb{R}^2$ Euclidean space is partitioned into $3$ cones and arm $1$ is the optimal arm because $\boldsymbol{\theta}^{*} \in \mathcal{C}(1)$. In each round, the arm selection strategy aims to pull in the direction that speeds up the refinement of the confidence set (the yellow ellipsoid) to align with a cone. Fig.~\ref{fig:before_pull} shows the confidence set at time $t$. Since it overlaps with $\mathcal{C}(1)$ and $\mathcal{C}(3)$, arms 1 and 3 can be optimal. To shrink the confidence set into one cone only, the next arm pull must be in the direction $\X_3-\X_2$. Fig.~\ref{fig:before_pull} shows the confidence set after the arm pull. Since the confidence set resides fully in cone 1, arm 1 is identified as the optimal.
\begin{figure}[htbp]
    \centering
    \begin{subfigure}{0.48\columnwidth}
        \centering
        \includegraphics[width=0.8\columnwidth]{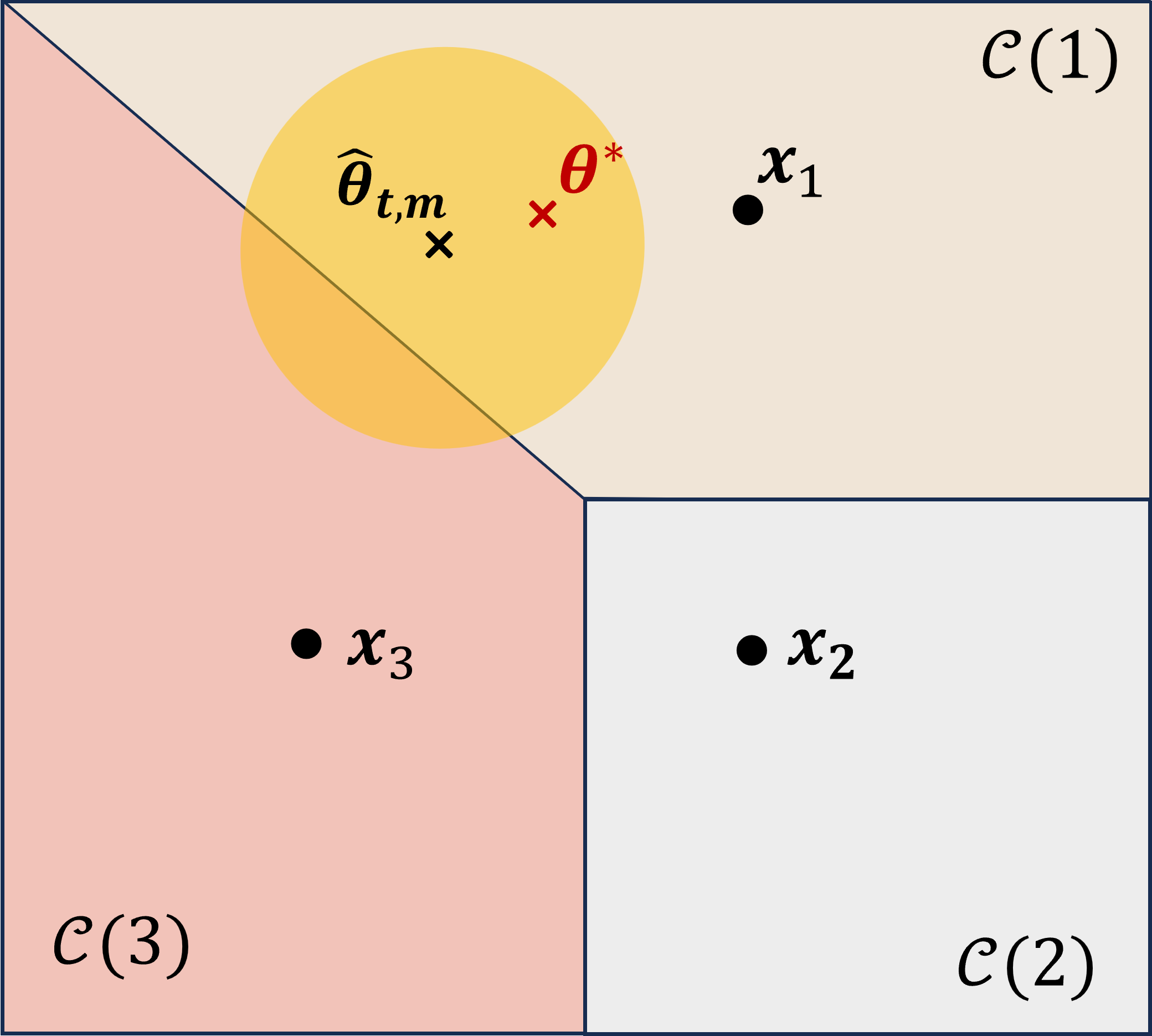}
        \caption{The confidence set at time $t$.}
        \label{fig:before_pull}
    \end{subfigure}
    \hfill
    \begin{subfigure}{0.48\columnwidth}
        \centering
        \includegraphics[width=0.8\columnwidth]{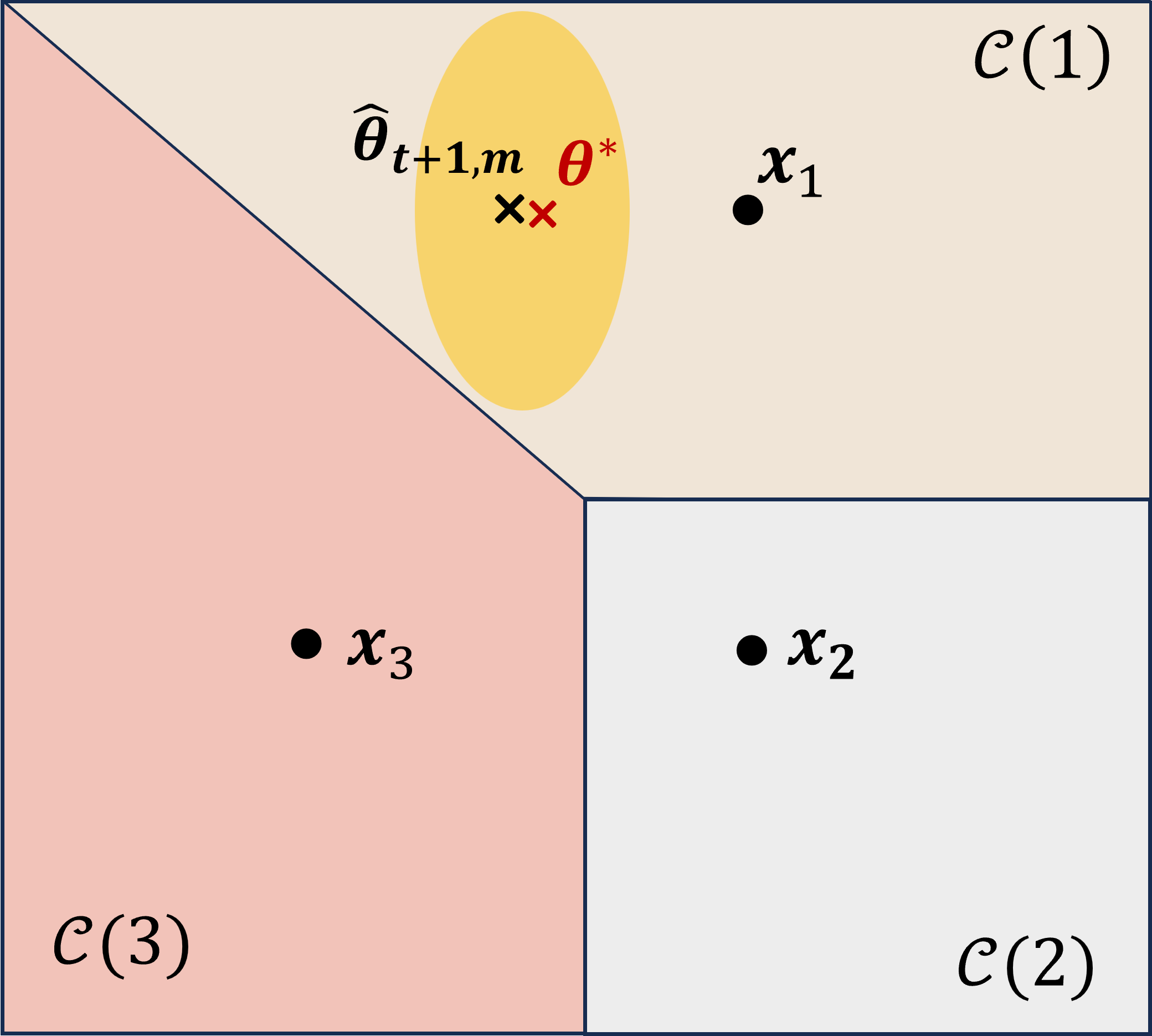}
        \caption{The confidence set at $t+1$.}
        \label{fig:after_pull}
    \end{subfigure}
    \caption{Illustration of BAI in LB.}
    \label{fig:conf_cone}
\end{figure}
\subsection{Construction of Confidence Sets}
In a single-agent linear bandit problem \cite{abbasi2011improved,soare2014best}, the agent uses locally observed information to update its estimate of the parameter vector $\boldsymbol{\theta}^{*}$. Let  $\lambda$ denote the regularization parameter and  $\boldsymbol{I}$ the identity matrix. The information includes the  $d~\times~d$ design matrix $\boldsymbol{A}_t = \lambda \boldsymbol{I} + \sum_{s=1}^{t} \X_{a_s} \X_{a_s}^\top$, and the vector $\boldsymbol{b}_t = \sum_{s=1}^{t} \X_{a_s} r_s$. Using this, the agent computes the $\ell_2$-regularized least squares estimate of $\boldsymbol{\theta}^{*}$, given by $\hat{\boldsymbol{\theta}}_{t} = \boldsymbol{A}_t^{-1} \boldsymbol{b}_t$.

However, in the collaborative algorithm, each agent $m$ has access not only to its locally observed samples but also to the additional samples collected by the coordinator from all $M$ agents. Let $t_n$ denote the start time of the $n$-th communication round. The locally updated matrix at agent $m$ at time $t$ is $ \Delta \boldsymbol{A}_{t,m} = \sum_{s =t_{n}}^t \X_{a_{s,m}}\X_{a_{s,m}}^\top$ and $\Delta \boldsymbol{b}_{t,m} =\sum_{s =t_{n}}^t  \X_{a_{s,m}}{r}_{s,m} $. 
When an agent triggers a communication event, all agents share $\Delta\boldsymbol{A}_{t,m}$ and $\Delta \boldsymbol{b}_{t,m}$ with the coordinator, which aggregates them to compute 
\begin{equation}
    \Ast{t} = \sum_{m=1}^{M}  \Delta \boldsymbol{A}_{t,m}  \,\, \text{and} \,\,  \bst{t} = \sum_{m=1}^{M}  \Delta \boldsymbol{b}_{t,m}.
    \label{eq:coor_stat}
\end{equation}
Both $\Ast{t}$ and $\bst{t}$ are sent to the agents for use in the next communication round.
Consequently, each agent uses the coordinator's aggregated data and its locally collected data to compute $ \boldsymbol{A}_{t,m}$ and $ \boldsymbol{b}_{t,m}$, defined in line~\ref{line:Atm_btm} in Algorithm~\ref{alg:DistLinGapE}. Using this information, the $\ell_2$-regularized estimate of the parameter vector is given as 
\begin{equation}
        \hat{\boldsymbol{\theta}}_{t,m} =  \boldsymbol{A}_{t,m}^{-1} \boldsymbol{b}_{t,m},
    \label{eq:theta_est}
\end{equation}
with the confidence bound given in Proposition~\ref{prop:confidence_ellipsoid} which is based on \cite[Theorem 2]{abbasi2011improved}.
\begin{proposition}\label{prop:confidence_ellipsoid}
{\textbf{Confidence Ellipsoid}}
    For linear bandits with conditionally $R$-sub-Gaussian noise for some $R \ge 0$, let $\lambda >0$ and assume that $\| \boldsymbol{\theta}^{*} \|_2 \le S$. Then, for any $\delta_m >0$ the confidence set constructed from $\boldsymbol{A}_{t,m}$ for $t \in \{1, 2, \dots \}$ is
    \begingroup
    \footnotesize
        \begin{equation}
            \label{eq:conf_ellipsoid}
            \mathscr{C}_{t,m} = \left\{ \boldsymbol{\theta} \in \Rd: \| \hat{\boldsymbol{\theta}}_{t,m} - \boldsymbol{\theta}^{*} \|_{\boldsymbol{A}_{t,m}} \le  
            R \sqrt{2 \log \frac{\det (\boldsymbol{A}_{t,m})^{\frac{1}{2}}}{\lambda^{\frac{d}{2}} \delta_m}} + \lambda^\frac{1}{2} S  \right\}.
        \end{equation}
    \endgroup
    Also, the following statement holds for $\X \in \Rd$
    \begin{equation} \label{eq:reward_diff}
            \lvert \boldsymbol{x}^{\top} (\hat{\boldsymbol{\theta}}_{t,m} - \boldsymbol{\theta}^{*}) \rvert \le \|\boldsymbol{x} \|_{\boldsymbol{A}_{t,m}^{-1}} C_{t,m},
    \end{equation}
    where 
    \begin{equation} \label{eq:conf_reward}
        C_{t,m} = R \sqrt{2 \log \frac{\det (\boldsymbol{A}_{t,m})^{\frac{1}{2}}}{\lambda^{\frac{d}{2}} \delta_m}} + \lambda^\frac{1}{2} S. 
    \end{equation}
\end{proposition}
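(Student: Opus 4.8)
The plan is to reduce the statement to the single-agent self-normalized concentration result of \cite[Theorem 2]{abbasi2011improved} by verifying that, in the distributed setting, the aggregated quantities $\boldsymbol{A}_{t,m}$ and $\boldsymbol{b}_{t,m}$ still arise from a valid adapted sequence of arm pulls with predictable contexts and $R$-sub-Gaussian rewards. First I would note that, by construction, $\boldsymbol{A}_{t,m}$ equals $\lambda\boldsymbol{I}$ plus a sum of rank-one terms $\boldsymbol{x}_{a_{s,m'}}\boldsymbol{x}_{a_{s,m'}}^\top$ and $\boldsymbol{b}_{t,m}$ equals the matching sum of $\boldsymbol{x}_{a_{s,m'}} r_{s,m'}$, where both sums range over the full pooled collection of (round, agent) pairs $(s,m')$ whose data agent $m$ possesses at time $t$ (its local pulls since the last communication together with everything in the coordinator's broadcast $\Ast{t}$, $\bst{t}$). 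Hence $\boldsymbol{A}_{t,m}$ is exactly a regularized design matrix for a single pooled stream of pulls.

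Next I would fix a total order on these pairs — say by communication round and then by agent index — and define the filtration generated by the contexts and rewards observed up to each position in this order. The key check is predictability: each pulled context $\boldsymbol{x}_{a_{s,m'}}$ is chosen from information available strictly before its reward $r_{s,m'}$ is revealed (past local observations plus the latest aggregate), so it is measurable with respect to the sigma-algebra preceding its noise term $\eta_{s,m'}$; since the $\eta_{s,m'}$ are independent, zero-mean and conditionally $R$-sub-Gaussian, conditioning on other same-round rewards does not alter this. The pooled sequence thus satisfies exactly the hypotheses of the self-normalized bound. This filtration construction is the only genuinely new ingredient relative to the cited single-agent result, and I expect it to be the main obstacle; everything downstream is the original argument applied verbatim to the pooled stream.

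Applying \cite[Theorem 2]{abbasi2011improved} to this stream then yields, with probability at least $1-\delta_m$ and simultaneously for all $t$, the bound $\norm{\hat{\boldsymbol{\theta}}_{t,m}-\boldsymbol{\theta}^*}_{\boldsymbol{A}_{t,m}} \le C_{t,m}$, which is precisely the confidence ellipsoid \eqref{eq:conf_ellipsoid}. To keep the derivation self-contained I would reproduce the short decomposition: substituting $r_{s,m'}=\boldsymbol{x}_{a_{s,m'}}^\top\boldsymbol{\theta}^*+\eta_{s,m'}$ gives $\boldsymbol{b}_{t,m}-\boldsymbol{A}_{t,m}\boldsymbol{\theta}^*=\boldsymbol{S}_{t,m}-\lambda\boldsymbol{\theta}^*$ with $\boldsymbol{S}_{t,m}=\sum \boldsymbol{x}_{a_{s,m'}}\eta_{s,m'}$, so that $\norm{\hat{\boldsymbol{\theta}}_{t,m}-\boldsymbol{\theta}^*}_{\boldsymbol{A}_{t,m}}=\norm{\boldsymbol{S}_{t,m}-\lambda\boldsymbol{\theta}^*}_{\boldsymbol{A}_{t,m}^{-1}}\le \norm{\boldsymbol{S}_{t,m}}_{\boldsymbol{A}_{t,m}^{-1}}+\lambda\norm{\boldsymbol{\theta}^*}_{\boldsymbol{A}_{t,m}^{-1}}$; the self-normalized martingale bound controls the first term by $R\sqrt{2\log(\det(\boldsymbol{A}_{t,m})^{1/2}/(\lambda^{d/2}\delta_m))}$, while $\boldsymbol{A}_{t,m}^{-1}\preceq\lambda^{-1}\boldsymbol{I}$ together with $\norm{\boldsymbol{\theta}^*}_2\le S$ bounds the second by $\lambda^{1/2}S$.

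Finally, for \eqref{eq:reward_diff} I would invoke the Cauchy--Schwarz inequality in the inner product induced by $\boldsymbol{A}_{t,m}$: for any $\boldsymbol{x}\in\Rd$, $\abs{\boldsymbol{x}^\top(\hat{\boldsymbol{\theta}}_{t,m}-\boldsymbol{\theta}^*)} = \abs{\langle \boldsymbol{A}_{t,m}^{-1/2}\boldsymbol{x},\,\boldsymbol{A}_{t,m}^{1/2}(\hat{\boldsymbol{\theta}}_{t,m}-\boldsymbol{\theta}^*)\rangle} \le \norm{\boldsymbol{x}}_{\boldsymbol{A}_{t,m}^{-1}}\,\norm{\hat{\boldsymbol{\theta}}_{t,m}-\boldsymbol{\theta}^*}_{\boldsymbol{A}_{t,m}}$, and then substitute the confidence-ellipsoid bound $\norm{\hat{\boldsymbol{\theta}}_{t,m}-\boldsymbol{\theta}^*}_{\boldsymbol{A}_{t,m}}\le C_{t,m}$ to obtain \eqref{eq:reward_diff}, completing the proof.
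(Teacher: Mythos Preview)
Your proposal is correct and in fact supplies considerably more detail than the paper itself: the paper does not prove Proposition~\ref{prop:confidence_ellipsoid} at all but simply states that it ``is based on \cite[Theorem 2]{abbasi2011improved}'' and moves on. Your reduction to a single pooled stream via a total order on (round, agent) pairs, the filtration check for predictability, the decomposition $\boldsymbol{b}_{t,m}-\boldsymbol{A}_{t,m}\boldsymbol{\theta}^* = \boldsymbol{S}_{t,m}-\lambda\boldsymbol{\theta}^*$, and the Cauchy--Schwarz step for \eqref{eq:reward_diff} are exactly the standard route through the Abbasi-Yadkori argument, so your approach is the natural fleshing-out of what the paper leaves implicit.
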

\begin{lemma}
For any $\delta_m >0$, $\boldsymbol{\theta}^{*}$ lies in the set $\mathscr{C}_{t,m}$ with probability $1-M\delta_m$, for all $t \in \{1, 2, \dots \}$ and all $m \in [M]$.
\label{lemma:error_prob}
\end{lemma}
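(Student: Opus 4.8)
The plan is to prove Lemma~\ref{lemma:error_prob} by a union bound over the $M$ agents, using the per-agent coverage guarantee of Proposition~\ref{prop:confidence_ellipsoid} as the building block. The central observation is that Proposition~\ref{prop:confidence_ellipsoid}, being a self-normalized (anytime-valid) concentration bound in the spirit of \cite[Theorem 2]{abbasi2011improved}, already holds \emph{simultaneously for all} $t \in \{1,2,\dots\}$ for each fixed agent $m$. Consequently no additional union bound over the time index is needed, and the only source of the factor $M$ in the statement is the number of agents.

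First I would fix an agent $m \in [M]$ and define its failure event $\mathcal{E}_m = \{\exists\, t \ge 1 : \boldsymbol{\theta}^{*} \notin \mathscr{C}_{t,m}\}$. Applying Proposition~\ref{prop:confidence_ellipsoid} with confidence parameter $\delta_m$ to the aggregated design matrix $\boldsymbol{A}_{t,m}$ of agent $m$ gives $\mathbb{P}[\mathcal{E}_m] \le \delta_m$. Before invoking this, one must verify that Proposition~\ref{prop:confidence_ellipsoid} is legitimately applicable to $\boldsymbol{A}_{t,m}$, which mixes the observations collected by \emph{all} $M$ agents through the coordinator with the local samples of agent $m$. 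I would introduce a single filtration $\{\mathcal{F}_t\}$ recording, in a fixed processing order within each round, every context--reward pair produced across all agents, and then check that (i) each pulled context $\X_{a_{s,j}}$ is predictable with respect to this filtration, since the arm choice depends only on past aggregated and local information, and (ii) the associated noise terms $\eta_{s,j}$ form a conditionally zero-mean, $R$-sub-Gaussian martingale difference sequence. Under (i)--(ii) the sum $\sum \X_{a_{s,j}} \eta_{s,j}$ entering $\boldsymbol{b}_{t,m}$ is a martingale, so the self-normalized bound underlying Proposition~\ref{prop:confidence_ellipsoid} applies verbatim to $\boldsymbol{A}_{t,m}$ and $\hat{\boldsymbol{\theta}}_{t,m}$, yielding $\mathbb{P}[\mathcal{E}_m] \le \delta_m$.

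With the per-agent bound in hand, I would apply the union bound over the $M$ agents:
\begin{equation}
\mathbb{P}\left[\bigcup_{m=1}^{M} \mathcal{E}_m\right] \le \sum_{m=1}^{M} \mathbb{P}[\mathcal{E}_m] \le \sum_{m=1}^{M} \delta_m = M \delta_m .
\end{equation}
Passing to the complementary event gives $\mathbb{P}[\bigcap_{m=1}^{M}\mathcal{E}_m^{\,c}] \ge 1 - M\delta_m$, that is, with probability at least $1 - M\delta_m$ the parameter $\boldsymbol{\theta}^{*}$ lies in $\mathscr{C}_{t,m}$ simultaneously for every $t \ge 1$ and every $m \in [M]$, which is exactly the claim.

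I expect the main obstacle to be the justification that Proposition~\ref{prop:confidence_ellipsoid} transfers to the aggregated matrix $\boldsymbol{A}_{t,m}$, rather than the union bound itself, which is routine. The delicate point is confirming the martingale-difference property of the combined noise across agents under one common filtration: because $\boldsymbol{A}_{t,m}$ merges data gathered at distinct agents and shared at data-dependent (hence random) communication times $t_n$, one must ensure the conditioning uses a filtration with respect to which both the communication triggers and the arm selections remain predictable. Once this adaptedness is established, the anytime-valid nature of the underlying concentration inequality removes any dependence on the random sampling or communication schedule, and the factor $M$ arises solely and cleanly from the union over agents.
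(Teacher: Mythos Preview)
Your proposal is correct and follows essentially the same approach as the paper, which simply states that the result follows from Proposition~\ref{prop:confidence_ellipsoid} together with a union bound over the $M$ agents. Your additional discussion about the filtration and the applicability of the self-normalized bound to the aggregated matrix $\boldsymbol{A}_{t,m}$ is a helpful elaboration that the paper leaves implicit, but the core argument is identical.
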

\begin{proof}
    This follows from Proposition~\ref{prop:confidence_ellipsoid} and a union bound over all agents.
\end{proof}
From Lemma~\ref{lemma:error_prob}, we conclude that to ensure an output with an error of at most $\delta$ in \eqref{eq:stop_cond} at each agent, we set $\delta = M \delta_m$.
\subsection{The Arm Selection Strategy}
The arm selection strategy at an agent aims to find the allocation $\x_t(m)= (\boldsymbol{x}_{a_{1,m}}, \dots, \boldsymbol{x}_{a_{t,m}})$ that identifies the best arm in the fewest number of rounds. To this end, at time $t$, each agent selects (without pulling) the arm with the maximum estimated reward, $i_{t,m}$, and the most ambiguous arm, $j_{t,m}$, given in lines \ref{line:local_best} and \ref{line:local_ambiguous} of Algorithm~\ref{alg:DistLinGapE}, respectively. Then, the agent pulls the arm that provides the largest information gain for estimating the  gap of expected rewards $( \boldsymbol{x}_{i_{t,m}} - \boldsymbol{x}_{j_{t,m}})^\top \boldsymbol{\theta}^{*}$. Geometrically, this shrinks the confidence set until it is contained within a single cone, thereby identifying the optimal arm \cite{xu2018fully}. 

In every round $t$ of Algorithm~\ref{alg:DistLinGapE}, each agent $m$ uses $\boldsymbol{A}_{t,m}$ and $\boldsymbol{b}_{t,m}$, given in line~\ref{line:Atm_btm}, to select the pulling direction specified by the \textsc{Select-Direction} procedure. The estimated gap between arms $i_{t,m}$ and $j_{t,m}$ is  
\begin{equation}
    \hat{\Delta}_{t,m}(i,j) =(\boldsymbol{x}_{i} - \boldsymbol{x}_{j})^{\top} \hat{\boldsymbol{\theta}}_{t,m},
    \label{eq:gap_est}
\end{equation}
and the confidence of the estimated gap is
\begin{equation}
    \beta_{t,m}(i,j) =  \|\boldsymbol{x}_{i} - \boldsymbol{x}_{j} \|_{\boldsymbol{A}_{t,m}^{-1}} C_{t,m},
    \label{eq:gap_bound}
\end{equation}
where $C_{t,m}$ is given in \eqref{eq:conf_reward}. 

In each time $t$, all agents check if the stopping condition is met, that is, if the upper confidence bound on the gap of the expected rewards falls below the target accuracy $\epsilon$,
\begin{equation}
    B_m(t)= \hat{\Delta}_{t,m}(j_{t,m},i_{t,m}) + \beta_{t,m}(j_{t,m},i_{t,m}) \le \epsilon.
\end{equation}
If the condition is satisfied, the algorithm stops and the arm with the highest estimated reward is identified as the best arm. Otherwise, the algorithm proceeds and each agent selects arm $a_{t,m}$ that results in the highest decrease in $ \|\boldsymbol{x}_{i_{t,m}} - \boldsymbol{x}_{j_{t,m}} \|_{\boldsymbol{A}_{t,m}^{-1}}$ and consequently in the confidence bound $ \beta_{t,m}(i_{t,m},j_{t,m})$. 

Following \cite{soare2014best,xu2018fully}, there are two  approaches for the arm selection strategy. In the greedy approach, each agent pulls the arm that minimizes the confidence bound in the current round, given by
\begin{equation}
     a_{t+1,m} = \underset{a \in [K]}{\arg \min} \|\boldsymbol{x}_{i_{t,m}} - \boldsymbol{x}_{j_{t,m}}\|_{(\boldsymbol{A}_{t,m} +\boldsymbol{x}_a \boldsymbol{x}_a^\top)^{-1}}.
     \label{eq:arm_greedy}
\end{equation}
Although we do not analyze the theoretical guarantees of this approach, we empirically show that it performs well in the numerical results \cite{xu2018fully}. 

Alternatively, the second approach involves finding the optimal ratio for arm $k$ appearing in the optimal allocation sequence $\x_t^{*}(m)$  as $t~\rightarrow~\infty$ \cite{xu2018fully}. This ratio minimizes $ \|\boldsymbol{x}_{i_{t,m}} - \boldsymbol{x}_{j_{t,m}}\|_{\boldsymbol{A}_{t,m}^{-1}}$. 
Let $ w_k^{*}(i_{t,m}, j_{t,m}) $ be the $k$-th element in the vector $\boldsymbol{\mathrm{w}}^{*} (i_{t,m}, j_{t,m})$ defined as
\begin{align}
    &\boldsymbol{\mathrm{w}}^{*} (i_{t,m}, j_{t,m}) = \underset{  \boldsymbol{\mathrm{w}} \in \Rd}{\arg \min} |\boldsymbol{\mathrm{w}}|, \\
    &\text{s.t.} \quad \boldsymbol{x}_{i_{t,m}} - \boldsymbol{x}_{j_{t,m}} = \sum_{k=1}^{K} w_k \boldsymbol{x}_k.
\end{align}

The optimal ratio for selecting arm $k$ is given by:
\begin{equation}
    p^{*}_k (i_{t,m}, j_{t,m}) = \frac{\lvert w_k^{*}(i_{t,m}, j_{t,m}) \rvert}{\sum_{k=1}^{K} \lvert w_k^{*}(i_{t,m}, j_{t,m}) \rvert }.
\end{equation}
Let $T_{a,m}(t)$ denote the number of times arm $a$ is selected up to time $t$. This includes the number of times arm $a$ was selected by other agents in the past communication rounds and the number of times the arm is selected locally until time $t$. The arm selection strategy aims to keep the actual arm selection ratio as close as possible to the target ratio. Thus, the pulled arm satisfies
\begin{equation}
     a_{t+1,m} = \underset{a \in [K]: p^{*}_a (i_{t,m}, j_{t,m}) >0}{\arg \min} \quad \frac{T_{m,a}(t)}{p^{*}_a (i_{t,m}, j_{t,m})}.
\label{eq:arm_selection}
\end{equation}

\begin{theorem}
\label{th:opt_arm}
    The arm $\hat{a}^{*}_m$ returned by the DistLinGapE using the arm selection strategy in \eqref{eq:arm_greedy} or \eqref{eq:arm_selection} satisfies \eqref{eq:stop_cond}.
\end{theorem}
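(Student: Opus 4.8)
The plan is to decouple correctness from the sample-complexity analysis. The arm returned by DistLinGapE is $\hat{a}^{*}_m = i_{\tau,m}$, the empirically best arm at the (random) round $\tau$ in which the stopping test $B_m(\tau) \le \epsilon$ first fires; correctness will follow solely from the stopping rule and the validity of the confidence sets, and it never references the pulled arm $a_{t+1,m}$. This is precisely why a single argument certifies both selection rules \eqref{eq:arm_greedy} and \eqref{eq:arm_selection}: these rules govern only how fast $\tau$ occurs, not the guarantee available at termination. First I would fix the good event $\mathcal{E} = \{\boldsymbol{\theta}^{*} \in \mathscr{C}_{t,m} \text{ for all } t \ge 1 \text{ and } m \in [M]\}$; by Lemma~\ref{lemma:error_prob} together with the calibration $\delta = M\delta_m$, we have $\mathbb{P}[\mathcal{E}] \ge 1 - \delta$.

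Working deterministically on $\mathcal{E}$, I would show the returned arm is $\epsilon$-optimal. If $a^{*} = i_{\tau,m}$ the claim is trivial, so assume $a^{*} \neq i_{\tau,m}$. Two facts combine here: (i) instantiating the confidence bound \eqref{eq:reward_diff} along the direction $\boldsymbol{x}_{a^{*}} - \boldsymbol{x}_{i_{\tau,m}}$ bounds $(\boldsymbol{x}_{a^{*}} - \boldsymbol{x}_{i_{\tau,m}})^{\top}\boldsymbol{\theta}^{*}$ by $\hat{\Delta}_{\tau,m}(a^{*}, i_{\tau,m}) + \beta_{\tau,m}(a^{*}, i_{\tau,m})$; and (ii) the most ambiguous arm $j_{\tau,m}$ (line~\ref{line:local_ambiguous}) is by construction the maximizer of $\hat{\Delta}_{\tau,m}(\cdot, i_{\tau,m}) + \beta_{\tau,m}(\cdot, i_{\tau,m})$ over arms other than $i_{\tau,m}$, so the upper bound attached to $a^{*}$ is dominated by that attached to $j_{\tau,m}$. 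Combining these with the stopping test and $\hat{a}^{*}_m = i_{\tau,m}$ yields
\begin{equation}
(\boldsymbol{x}_{a^{*}} - \boldsymbol{x}_{\hat{a}^{*}_m})^{\top}\boldsymbol{\theta}^{*} \le \hat{\Delta}_{\tau,m}(a^{*}, i_{\tau,m}) + \beta_{\tau,m}(a^{*}, i_{\tau,m}) \le B_m(\tau) \le \epsilon.
\end{equation}

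I would then close by contraposition: the event $\{(\boldsymbol{x}_{a^{*}} - \boldsymbol{x}_{\hat{a}^{*}_m})^{\top}\boldsymbol{\theta}^{*} > \epsilon\}$ is contained in $\mathcal{E}^{c}$, so its probability is at most $\delta$, which is exactly \eqref{eq:stop_cond}. The main obstacle, and the only genuinely delicate point, is justifying the use of the confidence guarantee at the data-dependent stopping time $\tau$ and the adaptively chosen pair $(i_{\tau,m}, j_{\tau,m})$. This is handled by the fact that Lemma~\ref{lemma:error_prob} is an anytime (\emph{for all $t$}) statement rather than a per-round one, so no additional union bound over $\tau$ is required; the adaptivity of the arm choices is already absorbed into the self-normalized martingale bound underpinning Proposition~\ref{prop:confidence_ellipsoid}.
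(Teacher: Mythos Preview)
Your proposal is correct and follows essentially the same argument as the paper's proof: both show that on the good confidence event (the paper phrases it via Lemma~\ref{lem:good_event}, you via Lemma~\ref{lemma:error_prob}, but these are equivalent), the stopping rule $B_m(\tau)\le\epsilon$ together with the fact that $B_m(\tau)$ dominates $\hat{\Delta}_{\tau,m}(a^{*},i_{\tau,m})+\beta_{\tau,m}(a^{*},i_{\tau,m})$ forces the returned arm to be $\epsilon$-optimal, and then contraposition gives \eqref{eq:stop_cond}. Your additional remarks---that the selection rule is irrelevant to correctness and that the anytime nature of Proposition~\ref{prop:confidence_ellipsoid} handles the random stopping time---are accurate and make explicit what the paper leaves implicit.
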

\begin{proof}
    The proof is presented in Appendix~\ref{app:confidence} in the supplemantal material.
\end{proof}
\begin{algorithm}[ht]
\caption{The DistLinGapE Algorithm}
\label{alg:DistLinGapE}
\begin{algorithmic}[1]
    \STATE \textbf{Input:}  $\epsilon$, $\delta$, $S$, $R$, $\rho$
 and $\lambda$
    \STATE \textbf{Output} Arm $\hat{a}^*$ that satisfies stopping condition.
    \STATE \textbf{Initialize agents}:
     $\forall m~\in~[M]$, $\boldsymbol{A}_{0,m}~=~\boldsymbol{0}$, $\boldsymbol{b}_{0,m}~=~\boldsymbol{0}$, $\Delta \boldsymbol{A}_{0,m}=\boldsymbol{0}$, $\Delta \boldsymbol{b}_{0,m}=\boldsymbol{0}, \Delta t_{0,m}=0$. For strategy \eqref{eq:arm_selection} initialize $\Delta T_{m,k}=0,  T_{m,k}=0 $ and $T_k=0 \, \forall k \in [K]$
    \STATE  \textbf{Initialize coordinator}: Set $\Ast{0}=\boldsymbol{0}$, $\boldsymbol{b}_{\text{coor},0}=\boldsymbol{0}$
    \FOR{$t=1, \dots, \infty$}
        \FOR{Agent $m=1, \dots, M$}
             \STATE \label{line:begin_local_update}
             $\boldsymbol{A}_{t,m} = \lambda \boldsymbol{I} +\Ast{t} + \Delta \boldsymbol{A}_{t,m}$, $\boldsymbol{b}_{t,m} = \boldsymbol{b}_{\text{coor},t} + \Delta \boldsymbol{b}_{t,m}$, and $ \Delta t_{t,m}  +=1$. \label{line:Atm_btm}
             \STATE Select  pulling direction: $(i_{t,m}, j_{t,m}, B_m(t)) \leftarrow \textsc{Select-Direction} (\boldsymbol{A}_{t,m}, \boldsymbol{b}_{t,m})$
            \IF{$B_m(t) \leq \epsilon$} \label{line:stopping_condition}
                \STATE $i_{t,m}$ is the best arm $\hat{a}^*$; Terminate algorithm
            \ENDIF
            \STATE \label{line:arm_pull} Pull arm $a_{t,m}$ according to \eqref{eq:arm_greedy} or \eqref{eq:arm_selection} 
            \STATE Observe $r_{t,m}$

            \STATE Update $\Delta \boldsymbol{A}_{t,m} += \boldsymbol{x}_{a_{t,m}}             \boldsymbol{x}_{a_{t,m}}^\top$, \\ $\Delta \boldsymbol{b}_{t,m} += \boldsymbol{x}_{a_{t,m}} r_{t,m}$. For strategy \eqref{eq:arm_selection} $\Delta T_{m,a_t}(t) +=1 $ and $T_{m,a_t}(t)=T_{a_t}(t) + \Delta T_{m,a_t}(t)$  \\
         // Communication condition
        \IF{$\Delta t_{t,m}\log \frac{\det (\boldsymbol{A}_{t,m})} { \det(\lambda \boldsymbol{I} + \Ast{t} )} > D$} \label{line:comm_condition}
            \STATE Each agent $m \in [M]$ sends $\Delta \boldsymbol{A}_{t,m}, \Delta \boldsymbol{b}_{t,m}$, and $\Delta T_{m,k}(t), \forall k \in [K] $ (for strategy  \eqref{eq:arm_selection}) to coordinator.
            \STATE Agents reset $\Delta \boldsymbol{A}_{t,m}=0, \Delta \boldsymbol{b}_{t,m}=0$, $\Delta t_{t,m}=0$, and $\Delta T_{m,k}(t)=0$.
            \STATE Coordinator collects $\Delta \boldsymbol{A}_{t,m}$, $\Delta \boldsymbol{b}_{t,m}$ and $\Delta T_{m,k}(t) \, \forall k \in [K]$ 
            \STATE Coordinator computes $\Ast{t} + = \sum_{m=1}^{M} \Delta \boldsymbol{A}_{t,m}$, \\ 
            $\boldsymbol{b}_{\text{coor},t} += \sum_{m=1}^{M} \Delta \boldsymbol{b}_{t,m}$ and  $T_k(t)=\sum_{m=1}^{M} \Delta T_{m,k}(t), \forall k \in [K]$
            \STATE Coordinator broadcasts $\Ast{t}, \boldsymbol{b}_{\text{coor},t}$ and $T_k(t), \forall k \in [K]$ to all agents
            \ENDIF        
        \ENDFOR
    \ENDFOR \\ \vspace{-0.3cm}
    \rule{\linewidth}{0.1pt}  \vspace{-0.5cm}
    \STATE \textbf{Procedure} {Select-Direction}($\boldsymbol{A}_{t,m}, \boldsymbol{b}_{t,m}$)
        \STATE \hspace{1em}$\hat{\boldsymbol{\theta}}_{t,m} \leftarrow \boldsymbol{A}^{-1}_{t,m}\boldsymbol{b}_{t,m}$ \label{line:est_theta_agent}
        \STATE \hspace{1em}$i_{t,m} \leftarrow \arg\max_{i\in[K]} (\boldsymbol{x}_i^\top\hat{\boldsymbol{\theta}}_{t,m})$ \label{line:local_best}
        \STATE \hspace{1em}$j_{t,m} \leftarrow \arg\max_{j\in[K]} (\hat{\Delta}_{t,m}(j,i_{t,m}) + \beta_{t,m}(j,i_{t,m}))$ \label{line:local_ambiguous}
        \STATE  \hspace{1em}$B_m(t) \leftarrow \max_{j\in[K]}  (\hat{\Delta}_{t,m}(j,i_{t,m}) + \beta_{t,m}(j,i_{t,m}))$
        \STATE  \hspace{1em}\textbf{return} $i_{t,m}, j_{t,m},B_m(t)$
    \STATE \textbf{EndProcedure}
    \end{algorithmic}
\end{algorithm}
%
%
\subsection{DistLinGapE Sample Complexity}
Using the arm selection strategy in \eqref{eq:arm_selection}, we derive an upper bound on the sample complexity of the proposed DistLinGapE algorithm and show that agent collaboration significantly reduces it.

The sample complexity is expressed in terms of the problem complexity, defined as follows \cite{xu2018fully}
\begin{equation}
    H_\epsilon \coloneqq \sum_{k=1}^K \underset{i,j \in [K]}{\max} \frac{p^{*}_k (i, j) \alpha(i,j)}{\max \left(\epsilon, \frac{\epsilon+\Delta_i}{3} , \frac{\epsilon+\Delta_j}{3}\right)^2},
    \label{eq:H}
\end{equation}
where  $\alpha(i,j)= |\boldsymbol{\mathrm{w}}^{*}(i,j)| $ and  $\Delta_i = (\X_{a^{*}}-\X_i)^\top \boldsymbol{\theta}^{*}$ for $a^{*} \neq i$. The value of $H_\epsilon$ captures the order of the minimum number of samples required to distinguish between arms $i$ and $j$ with accuracy $\epsilon$.

The improvement in sample complexity of the collaborative algorithm compared to the centralized (single-agent) algorithm is measured by the algorithm speedup. Let $T_{\mathcal{O}}$ be the per-agent sample complexity of the best centralized algorithm and $T_{\mathcal{A}}$ be the per-agent sample complexity of the proposed collaborative algorithm $\mathcal{A}$. The speedup is defined as \cite{tao2019collaborative}
\begin{equation}
    S_{\mathcal{A}} = T_{\mathcal{O}}/T_{\mathcal{A}}.
\end{equation}
The speedup can take any value between $1 \le S_{\mathcal{A}} \le M$. It is $1$ if the agents learn in isolation and $M$ is the optimal speedup.

To understand the speedup $S_{\mathcal{A}}$, we note that in the multi-agent setting, the samples required to achieve an accuracy of $\epsilon$ in \eqref{eq:H} are collected across $M$ agents. Ideally, this would lead to a per-agent sample complexity that is $1/M$ of the single-agent case, i.e., $T_{\mathcal{A}} = T_{\mathcal{O}}/M$. However, due to the randomness in the observed rewards, which influences the triggering of communication events, the actual per-agent sample complexity can be higher, with $T_{\mathcal{A}} = T_{\mathcal{O}}/S_{\mathcal{A}} $. 

Theorem \ref{th:sample_complexity} gives the upper bound on the per-agent sample complexity.
\begin{theorem}
\label{th:sample_complexity}
    The DistLinGapE algorithm can identify the $(\epsilon, M\delta_m)$-best arm using the arm selection strategy in \eqref{eq:arm_greedy} or \eqref{eq:arm_selection} with probability $1-M \delta_m$ and the following sample complexity per agent:
    \begin{itemize}
        \item  For $\lambda \leq \frac{2R^2}{S^2}\log\frac{K^2}{\delta_m}$: 
        \begin{equation}
            \tau_m \leq \mu + 4 \Hm R^2 \left(
            2 \log\frac{K^2}{\delta_m}  +
            d\log\left(1+\frac{Y^2 L^2}{\lambda d}\right),
    \right) 
    \end{equation}
    where $\mu = \frac{K}{M}+1$, $Y = 2\sqrt{16  H_\epsilon^2 R^4 d L^2/(M\lambda) +N^2}$, and $N=\frac{8 H_\epsilon R^2}{M} \log \frac{K^2}{\delta_m}$.
    \item For $\lambda > 4  H_\epsilon R^2L^2$:
    \begin{equation}
        \tau_m \le 2 \left( \frac{4 H_\epsilon R^2}{M} \log\frac{ K^2}{\delta_m} +  \frac{2 H_\epsilon \lambda S^2}{M} + \mu \right) 
    \end{equation}

  \end{itemize}
\end{theorem}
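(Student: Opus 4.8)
The plan is to work throughout on the ``good'' event of Lemma~\ref{lemma:error_prob}, on which $\boldsymbol{\theta}^{*}\in\mathscr{C}_{t,m}$ simultaneously for every agent and round; this event has probability at least $1-M\delta_m$ and already yields the correctness guarantee of Theorem~\ref{th:opt_arm}, so it remains only to bound the stopping time $\tau_m$ on this event. Let $\tau_m$ be the first round at which $B_m(t)\le\epsilon$. For every $t<\tau_m$ we have $B_m(t)>\epsilon$, and since on the good event $\hat{\Delta}_{t,m}(j_{t,m},i_{t,m})+\beta_{t,m}(j_{t,m},i_{t,m})$ upper bounds the true gap of the two selected arms, a not-yet-stopped round forces the confidence width $\beta_{t,m}(i_{t,m},j_{t,m})=\|\boldsymbol{x}_{i_{t,m}}-\boldsymbol{x}_{j_{t,m}}\|_{\boldsymbol{A}_{t,m}^{-1}}C_{t,m}$ to be lower bounded in terms of $\epsilon$ and the gaps $\Delta_i,\Delta_j$. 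This is the step that produces the denominator $\max(\epsilon,(\epsilon+\Delta_i)/3,(\epsilon+\Delta_j)/3)^2$ appearing in $H_\epsilon$, and I would reproduce the LinGapE argument of \cite{xu2018fully} for it, taking care that all quantities are now built from the aggregated matrix $\boldsymbol{A}_{t,m}$.

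Next I would analyze how fast the uncertainty term $\|\boldsymbol{x}_{i_{t,m}}-\boldsymbol{x}_{j_{t,m}}\|_{\boldsymbol{A}_{t,m}^{-1}}^2$ shrinks under the selection rule \eqref{eq:arm_selection}. The rule keeps the empirical pull counts $T_{a,m}(t)$ proportional to the optimal ratios $p^{*}_a(i_{t,m},j_{t,m})$, and the decomposition $\boldsymbol{x}_{i}-\boldsymbol{x}_{j}=\sum_k w_k^{*}\boldsymbol{x}_k$ lets me bound the weighted norm by $\alpha(i,j)$ divided by the effective number of samples in the relevant directions. The essential point specific to the distributed setting is that $\boldsymbol{A}_{t,m}=\lambda\boldsymbol{I}+\Ast{t}+\Delta\boldsymbol{A}_{t,m}$ accumulates the pulls of \emph{all} $M$ agents, so the effective sample count in each direction is up to $M$ times what a single agent would gather; this is exactly what converts $H_\epsilon$ into $H_\epsilon/M$ in the leading term. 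Summing the per-direction requirement over $k\in[K]$ and taking the worst case over $(i,j)$ introduces $H_\epsilon$ and yields a preliminary bound of the rough shape $\tau_m\lesssim \mu+(H_\epsilon/M)\,C_{\tau_m,m}^2$, where the additive $\mu=K/M+1$ absorbs the initial exploration and the at-most-one extra pull each agent may take before the next synchronization.

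It then remains to control $C_{t,m}^2=\bigl(R\sqrt{2\log(\det(\boldsymbol{A}_{t,m})^{1/2}/(\lambda^{d/2}\delta_m))}+\lambda^{1/2}S\bigr)^2$, and this is where the two regimes of $\lambda$ separate. The determinant--trace inequality $\det(\boldsymbol{A}_{t,m})\le(\lambda+\tau_m M L^2/d)^{d}$ bounds the logarithmic term by $\tfrac{d}{2}\log(1+\tau_m M L^2/(\lambda d))$, and a union bound over the $K^2$ ordered arm pairs (replacing $\delta_m$ by $\delta_m/K^2$) produces the factor $\log\frac{K^2}{\delta_m}$. When $\lambda$ is small ($\lambda\le\frac{2R^2}{S^2}\log\frac{K^2}{\delta_m}$) the sub-Gaussian term dominates $\lambda^{1/2}S$, so $C_{t,m}^2$ is controlled by $R^2(2\log\frac{K^2}{\delta_m}+d\log(1+\tau_m M L^2/(\lambda d)))$; when $\lambda$ is large ($\lambda>4H_\epsilon R^2L^2$) the term $\lambda^{1/2}S$ dominates and the log-determinant contribution is absorbed, leaving the cleaner bound containing $2H_\epsilon\lambda S^2/M$. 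Plugging $C_{\tau_m,m}^2$ back into the preliminary bound gives the two displayed inequalities after identifying $N$ and $Y$ as the a~priori bounds on the accumulated sample count that are substituted inside the logarithm.

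The main obstacle is the self-referential nature of the small-$\lambda$ bound: $\tau_m$ appears both on the left and, through $\det(\boldsymbol{A}_{t,m})$, inside the logarithm on the right. I would resolve this with a fixed-point (transcendental-inequality) argument, first establishing a coarse bound on the total accumulated sample count, which is precisely the quantity $Y=2\sqrt{16H_\epsilon^2R^4dL^2/(M\lambda)+N^2}$, and then substituting it for $\tau_m M$ inside the log to obtain the stated closed form. A secondary but genuinely delicate point is that each agent computes with the \emph{stale} aggregate $\Ast{t}$ rather than a fully synchronized matrix; the communication test on line~\ref{line:comm_condition} guarantees $\Delta t_{t,m}\log(\det(\boldsymbol{A}_{t,m})/\det(\lambda\boldsymbol{I}+\Ast{t}))\le D$ between synchronizations, and I would use this to show that the discrepancy between the matrix actually used and the ideal one inflates the sample count only by a bounded multiplicative factor (the gap between the achieved speedup $S_{\mathcal{A}}$ and the ideal $M$), thereby justifying the per-agent $1/M$ scaling up to that factor.
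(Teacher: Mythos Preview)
Your proposal is correct and follows essentially the same approach as the paper: work on the good event, invoke the LinGapE gap argument (Lemma~\ref{lem:bound_B}) together with the norm bound via effective sample counts (Lemma~\ref{lem:T_mt}) to reach $\tau_m\le\mu+(H_\epsilon/M)\,C^2_{T,m^*}$, then control $C^2$ with the determinant--trace inequality and split into the two $\lambda$ regimes, resolving the self-referential small-$\lambda$ case via the fixed-point trick that produces $N$ and $Y$. Your final paragraph on the staleness of $\Ast{t}$ is in fact more careful than the paper's own proof, which simply sets $\tau_m=\lceil\tau/M\rceil$ with $\tau$ the total number of samples accumulated in $\boldsymbol{A}_{T,m^*}$ and does not separately quantify the effect of unsynchronized pulls.
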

\begin{proof}
    The proof of Theorem~\ref{th:sample_complexity} is given in Appendix~\ref{app:sample_complexity} in the supplemental material.
\end{proof}

\subsection{Communication Rounds}
\label{sec:comm_rounds}
We measure the communication cost as the number of communication rounds in the learning process. To minimize this cost, an agent $m$ initiates a communication round only when there is a significant change in the matrix $\boldsymbol{A}_{t,m}$. 
Since the volume of the confidence ellipsoid given by \eqref{eq:conf_ellipsoid} depends on $\det (\boldsymbol{A}_{t,m})$, where $\det(\cdot)$ is the matrix determinant, this change is evaluated by the ratio $\frac{\det (\boldsymbol{A}_{t,m})} { \det(\lambda \boldsymbol{I} + \Ast{t} )}$. Here, $\det(\lambda \boldsymbol{I} + \Ast{t} )$ reflects the volume at the beginning of the current communication round and $\det (\boldsymbol{A}_{t,m})$ is proportional to the volume of the confidence ellipsoid at round $t$ \cite{wang2019distributed}.  

Let $\tau$ be the total sample complexity of the DistLinGapE algorithm across all agents. The corresponding upper bound on the total number of communication rounds is given by Theorem~\ref{th:comm_cost}.
\begin{theorem}
\label{th:comm_cost} 
    The total communication cost of the DistLinGapE algorithm is upper bounded by $O\left( M \sqrt{ \frac{ \tau d \log_2 \tau }{D} }\right)$.
\end{theorem}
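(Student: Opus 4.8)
The plan is to bound the number of communication rounds $P$ and then observe that, since every round of line~\ref{line:comm_condition} makes all $M$ agents transmit to (and receive a broadcast from) the coordinator, the total communication cost is $O(MP)$. It therefore suffices to prove $P = O\!\left(\sqrt{\tau d\log_2\tau/(MD)}\right)$. I would index the rounds by $n=1,\dots,P$, let $t_n$ be the time round $n$ is triggered, and let $\Delta t_n$ be the common per-agent epoch length (the value of $\Delta t_{t,m}$ at the trigger, since the agents run synchronously). Writing $\boldsymbol{G}_n = \lambda\boldsymbol{I}+\Ast{t_n}$ for the coordinator's regularized matrix at the start of round $n$ and $R_n = \log\frac{\det(\boldsymbol{A}_{t_n,m})}{\det(\boldsymbol{G}_n)}$ for the agent $m$ that triggers it, the communication condition in line~\ref{line:comm_condition} gives the key per-round inequality $\Delta t_n\,R_n > D$.

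The crux is to control $\sum_n R_n$, and here I would exploit the aggregation step. After round $n$ the coordinator forms $\boldsymbol{G}_{n+1}=\boldsymbol{G}_n+\sum_{m=1}^M\Delta\boldsymbol{A}_{t_n,m}$, whereas the triggering agent's matrix is only $\boldsymbol{A}_{t_n,m}=\boldsymbol{G}_n+\Delta\boldsymbol{A}_{t_n,m}$. Because the remaining agents' updates are positive semidefinite, $\boldsymbol{G}_{n+1}\succeq\boldsymbol{A}_{t_n,m}$, and monotonicity of the determinant under PSD additions yields $R_n\le\log\det(\boldsymbol{G}_{n+1})-\log\det(\boldsymbol{G}_n)$. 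Summing telescopes to $\sum_{n=1}^P R_n\le\log\frac{\det(\boldsymbol{G}_{P+1})}{\det(\lambda\boldsymbol{I})}$. Since $\boldsymbol{G}_{P+1}=\lambda\boldsymbol{I}+\sum_{s}\boldsymbol{x}_{a_s}\boldsymbol{x}_{a_s}^\top$ accumulates all $\tau$ pulls with $\|\boldsymbol{x}_k\|\le L$, the standard determinant--trace inequality \cite{abbasi2011improved} gives $\sum_n R_n\le d\log(1+\tau L^2/(\lambda d))=O(d\log\tau)$. I would also use that the epochs partition the $T=\tau/M$ global time steps, so $\sum_n\Delta t_n=\tau/M$.

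Finally I would combine the pieces with Cauchy--Schwarz. From $\Delta t_n R_n>D$ we get $1<\sqrt{\Delta t_n R_n}/\sqrt{D}$, hence $P<\frac{1}{\sqrt{D}}\sum_n\sqrt{\Delta t_n R_n}\le\frac{1}{\sqrt{D}}\sqrt{\sum_n\Delta t_n}\sqrt{\sum_n R_n}$. Substituting the two bounds above gives $P=O\!\left(\sqrt{\tau d\log\tau/(MD)}\right)$, and multiplying by the $M$ messages per round recovers the claimed $O\!\left(\sqrt{M\tau d\log_2\tau/D}\right)$; the base of the logarithm only changes constants absorbed by $O(\cdot)$.

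I expect the main obstacle to be the telescoping bound on $\sum_n R_n$: one must argue cleanly that each \emph{local} log-determinant ratio evaluated at a trigger is dominated by the corresponding increment of the \emph{global} (coordinator) log-determinant, which rests on the PSD domination $\boldsymbol{G}_{n+1}\succeq\boldsymbol{A}_{t_n,m}$, and on verifying that the final coordinator matrix contains exactly the $\tau$ accumulated samples so that the determinant--trace bound applies with the total count $\tau$ rather than a per-agent count. Once that is in place, the Cauchy--Schwarz step and the reduction from rounds to total cost are routine.
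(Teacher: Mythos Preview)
Your proof is correct and reaches the same bound, but the combination step differs from the paper's. Both arguments rest on the same three ingredients: the trigger inequality $\Delta t_n R_n>D$, the per-agent epoch budget $\sum_n\Delta t_n=\tau/M$, and the telescoping control $\sum_n R_n\le\log\frac{\det(\boldsymbol{G}_{P+1})}{\det(\lambda\boldsymbol{I})}=O(d\log\tau)$ obtained from the PSD domination $\boldsymbol{G}_{n+1}\succeq\boldsymbol{A}_{t_n,m}$ together with the determinant--trace inequality. Where you apply Cauchy--Schwarz directly, writing $P<D^{-1/2}\sum_n\sqrt{\Delta t_n R_n}\le D^{-1/2}\sqrt{\sum_n\Delta t_n}\sqrt{\sum_n R_n}$, the paper instead introduces a free threshold $\gamma$: epochs with total length $\ge\gamma$ number at most $\lceil\tau/\gamma\rceil$ by pigeonhole, while for the short ones the trigger forces $R_n>DM/\gamma$ and the telescoping sum caps their count at $\lceil\gamma\,i_{\max}/(DM)\rceil$ with $i_{\max}=O(d\log\tau)$; optimizing $\gamma=\sqrt{DM\tau/i_{\max}}$ balances the two terms. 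Your route is a bit cleaner because it dispenses with the auxiliary parameter and the case split; the paper's threshold argument mirrors the DELB analysis of Wang et al.\ \cite{wang2019distributed} and makes the long/short epoch dichotomy explicit. Either way the final step---multiplying the epoch count by $M$ to convert rounds into total messages---is identical.
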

\begin{proof}
    The proof of Theorem \ref{th:comm_cost} is given in Appendix \ref{app:comm_cost} in the supplemental material.
\end{proof}
\subsection{Communication Threshold Selection}

The communication threshold $D$ in Algorithm~\ref{alg:DistLinGapE} controls the
tradeoff between communication overhead and convergence speed. Communication
with the coordinator is triggered only when the condition in
Line~\ref{line:comm_condition} of Algorithm~\ref{alg:DistLinGapE} is satisfied.

Let $T$ denote the (random) stopping time of DistLinGapE measured in global
rounds, where a global round corresponds to an iteration at which communication
with the coordinator may be triggered. Between two consecutive global rounds,
each agent may perform multiple local updates and collect multiple samples.
Let $\tau_m$ denote the total number of local samples collected by agent $m$ up
to stopping time $T$, and let
$\tau = \sum_{m=1}^{M} \tau_m$ denote the total sampling complexity across all agents.

From Theorem~\ref{th:sample_complexity}, each agent has a deterministic upper
bound $\tau_m \le \bar{\tau}_m$, which implies $
\tau \le \sum_{m=1}^{M} \bar{\tau}_m.$
Note that, in general, $\tau_m \ge T$, since agents may collect more than one
local sample between communication events.

Since at most one synchronization event can be triggered per global round, and
each synchronization consists of $2M$ transmissions, the
total number of communications up to stopping time $T$ satisfies
\begin{equation}
    C(T) \le 2MT \qquad \text{almost surely}.
\end{equation}

We introduce a communication rate parameter $\rho \in (0,1]$ to control the
communication overhead. The communication budget is defined as a fraction
$\rho$ of the number of global rounds and is given by
\begin{equation}
    B_{\mathrm{c}} = \big\lceil \rho\,T \big\rceil,
\end{equation}
which corresponds, on average, to one synchronization every $1/\rho$ global
rounds.

Using the communication bound in Theorem~\ref{th:comm_cost},
\[
C(T)
=
O\!\left(
M\sqrt{\frac{\tau\, d\log_2(\tau)}{D}}
\right),
\]
we choose the threshold $D$ by inverting the bound so that the total number of
communications satisfies $C(T) \le B_{\mathrm{c}}$, which yields
\begin{equation}
    D
    =
    \frac{M^{2}\,\tau\, d\log_2(\tau)}{B_{\mathrm{c}}^{\,2}}.
\end{equation}
With this choice of $D$, the total communication cost satisfies
$C(T) = O(B_{\mathrm{c}})$ in the worst case.

\subsection{Computational Complexity of DistLinGapE}

The DistLinGapE runs in polynomial time, which makes it suitable for problems with a moderate number of arms and context dimension $d$. Consequently, it is well suited our service placement problem with a moderate number of SBSs, services, and context dimensions. 

The dominant computational cost of the algorithm comes from evaluating the confidence terms at the agents for arm selection, as discussed below. 

\begin{proposition}
\label{prop:distlingape_comp}
Consider DistLinGapE with $K$ arms and $d$-dimensional context vectors. By using the inverse matrix lemma to obtain the local inverse design matrix $\boldsymbol{A}_{t,m}^{-1}$,
the computational complexity per round at each agent $m$ is $O(Kd^2)$. Consequently, the total computational complexity is $O(M \tau_m K d^2) +O\left(N_{\text{comm}}(Md^2 +MK)\right)$,
where $N_{\text{comm}}$ is the number of communication rounds. 
\end{proposition}

\begin{proof}
At iteration $t$ and agent $m$, DistLinGapE performs the following steps. First, it updates $\boldsymbol{A}_{t,m}^{-1}$ and $\hat{\boldsymbol{\theta}}_{t,m}$ using the inverse matrix lemma, which costs $O(d^2)$. Then, it identifies the current best arm which costs $O(Kd)$. Then, it identifies the most ambiguous arm by evaluating the confidence terms  $
\left\|\boldsymbol{x}_j - \boldsymbol{x}_{i_{t,m}}\right\|_{\boldsymbol{A}_{t,m}^{-1}}^2,
$ with a cost $O(d^2)$. Evaluating this quadratic form for all $K$ candidate arms results in a total cost of $O(Kd^2)$, which dominates the other values. Hence, the  computational complexity at each agent and each round is $O(Kd^2)$, and the cost over all iterations is $O(\tau_m Kd^2)$. 

In each global round, the coordinator aggregates the values of $\Delta \boldsymbol{A}_{t,m}$ at a cost of $O(M d^2)$, and $\Delta \boldsymbol{b}_{t,m}$ at a cost of $O(M d)$, and the $\Delta T_{m,k}, \forall k\in [K]$ at a cost $O(MK)$. So the aggregation cost over $N_{\text{comm}}$ rounds is $O\left(N_{\text{comm}}(Md^2 +MK)\right)$
\end{proof}

\section{Extensions to Heterogeneous Agents}
\label{sec:extension}

The DistLinGapE algorithm proposed in Section~\ref{sec:DistLinGapE} and the corresponding theoretical analysis consider a homogeneous multi-agent setting, where all agents have access to the same set of arms (services) and collaboratively identify a common best arm. This model addresses an important cooperative service placement problem for SBSs operating in homogeneous environments, such as hospitals, industrial facilities, or campuses. It could be also used in more diverse environments with a subset of homogeneous SBSs, as collaboration among them accelerates learning. However, practical communication systems can be heterogeneous across agents, while sharing the same parameter vector $\boldsymbol{\theta}^*$.

Heterogeneity across SBSs can result from differences in user density or local demand patterns. This results in variations in context vectors, which affect the observed rewards of the arms. In this section, we discuss a heterogeneous case that can be address, to some extent, using the DistLinGapE algorithm. It  considers agents with different sets of context vectors but a common reward ordering and share the same globally optimal arm. This setting models SBSs operating under different user densities but similar service preferences. Then, we discuss a possible extension for the DistLinGapE that considers agents with different contexts and potentially different reward orderings, where the objective is to identify the best arm locally at each agent, corresponding to SBSs with heterogeneous user preferences and densities.

\subsection{Unequal User Density: Agents with a Common Arm Ordering}

We consider the case in which SBSs may have different user densities but similar service preferences. Since the context vector used for service placement decision is constructed from long-term average service demand, the context vectors across SBSs differ only by a positive scaling factor. As a result, although the reward values vary across SBSs, the underlying arm ordering structure determined by the context vectors remains unchanged across agents.

In this setting, the DistLinGapE algorithm remains valid, as all agents have access to the same set of arms, corresponding to the same set of services, and the heterogeneity across SBSs is reflected only in the context vectors. To explain this, let the context vector of service $k$ at SBS $m$ be given by
\[
\boldsymbol{x}_{k,m} = c_m \boldsymbol{x}_k, \quad c_m > 0,
\]
where $c_m$ reflects the user density at SBS $m$, and $\boldsymbol{x}_k$ represents the contextual information of service $k$. For any two services $k$ and $\ell$, the ordering of their expected rewards at SBS $m$ is determined by
\[
(\boldsymbol{x}_{k,m} - \boldsymbol{x}_{\ell,m})^\top \boldsymbol{\theta}^* \ge 0,
\]
which is equivalent to
\[
(\boldsymbol{x}_{k} - \boldsymbol{x}_{\ell})^\top \boldsymbol{\theta}^* \ge 0.
\]
As a result, the resulting partition of the parameter space into cones remains identical across all SBSs. Consequently, all $M$ agents solve the same BAI problem in terms of space partitioning and arm comparisons. The BAI process therefore follows Algorithm~\ref{alg:DistLinGapE}, with the scaling factors $c_m$ affecting only the magnitude of the observed rewards and the rate at which confidence sets shrink, so the sample complexity and communication bounds derived for homogeneous contexts continue to hold, up to constant scaling factors.

\subsection{Heterogeneous Demand: Local Best-Arm Identification}

In a more general heterogeneous setting, SBSs may serve user populations with different service preferences, resulting in demand vectors that differ beyond simple scaling. Consequently, the reward ordering across services may differ across SBSs, even though rewards remain linear in the context with a shared parameter vector $\boldsymbol{\theta}^*$.

In this case, the context vectors associated with the arms (services) differ across agents and the separating hyperplanes defined by
\[
(\boldsymbol{x}_{k,m} - \boldsymbol{x}_{\ell,m})^\top \boldsymbol{\theta} = 0
\]
may no longer coincide across SBSs. This leads to different  partitions of the parameter space into cones across agents and different local best arms. In this case, the learning objective shifts from identifying a single global best service to identifying the local best service at each SBS.

Despite this heterogeneity, collaboration in learning the shared $\boldsymbol{\theta}^*$ remains beneficial. By exchanging information, SBSs can improve the estimation accuracy of $\boldsymbol{\theta}^*$ and accelerate confidence shrinkage. However, the stopping and recommendation decisions must be defined locally at each SBS. 

\section{Robustness to Communication Failures}
\label{sec:robustness}
The proposed DistLinGapE algorithm assumes reliable communication; however, practical networks are prone to communication failures and  delays. In these cases, the algorithm degrades gracefully. When updates between SBSs and the coordinator are delayed or unavailable, each SBS continues updating its local statistics $\boldsymbol{A}_{t,m}$ and $\boldsymbol{b}_{t,m}$ based on local observations. This reduction in information leads to slower convergence and increased sample complexity, while preserving the correctness of the stopping condition and the identification of the best arm. In the extreme case of full communication failure, the algorithm reduces to independent LinGapE learning at each SBS.

\begin{lemma}
Consider the DistLinGapE algorithm under possible communication failures.
For any SBS $m$, round $t$, and any pair of arms $(i,j)$, the estimated gap under full communication and its confidence are given in \eqref{eq:gap_est} and \eqref{eq:gap_bound}, respectively. Let $\tilde{\beta}_{t,m}(i,j)$ be the confidence  constructed using the potentially partial aggregated statistics available due to communication failures.
Then, with probability at least $1-\delta$, it holds that
\begin{equation}
    \tilde{\beta}_{t,m}(i,j)
\;\ge\;
\beta_{t,m}(i,j),
\quad \forall t,\ \forall m \in [M],\ \forall i,j \in [K].
\end{equation}
Consequently, communication failures do not reduce the confidence bounds on the estimated gaps and therefore cannot cause early stopping or false identification of the best arm, they can only increase the stopping time.
\end{lemma}

\begin{proof}[Proof sketch]
At round $t$, let $\boldsymbol{A}_{\text{coor},t}$ given in \eqref{eq:coor_stat}
be the aggregated updates at the coordinator under full communication.
Under communication failures, the coordinator may only receive updates from a subset of SBSs, yielding
\begin{equation}
\tilde{\boldsymbol{A}}_{\text{coor},t}
=
\sum_{j \in \mathcal{S}_{t}} \Delta \boldsymbol{A}_{t,j},
\end{equation}
where $\mathcal{S}_{t} \subseteq \{1,\ldots,M\}$ is the set of SBSs whose updates are successfully received by the coordinator at round $t$.

Since each $\Delta \boldsymbol{A}_{t,j}$ is positive semidefinite, then 
$
\tilde{\boldsymbol{A}}_{\text{coor},t}
\preceq
\boldsymbol{A}_{\text{coor},t}$. The local design matrix constructed at SBS $m$ under communication failures is $\tilde{\boldsymbol{A}}_{t,m}
=\lambda \boldsymbol{I}
+\tilde{\boldsymbol{A}}_{\text{coor},t}
+\Delta \boldsymbol{A}_{t,m}.$
Therefore,
$\tilde{\boldsymbol{A}}_{t,m}
\preceq
\boldsymbol{A}_{t,m}$ implies 
$\tilde{\boldsymbol{A}}_{t,m}^{-1}
\succeq
\boldsymbol{A}_{t,m}^{-1}.$

For any pair of arms $(i,j)$, it follows that
$\|\boldsymbol{x}_i - \boldsymbol{x}_j\|_{\tilde{\boldsymbol{A}}_{t,m}^{-1}}
\ge
\|\boldsymbol{x}_i - \boldsymbol{x}_j\|_{\boldsymbol{A}_{t,m}^{-1}},$ 
which  implies
\begin{equation}
    \tilde{\beta}_{t,m}(i,j)
\ge
\beta_{t,m}(i,j).
\end{equation}

Larger confidence bounds due to communication failures cannot satisfy the stopping condition early. Thus, communication failures may delay stopping but cannot lead to false identification of the best arm.
\end{proof}
%

\section{Numerical Results}
\label{sec:numerical_results}
In this section, we present and analyze the performance of the proposed algorithm, first using synthetic data and then through simulations of a small cell network. We compare the performance of the proposed DistLinGapE algorithm with the following baselines:
\begin{enumerate}
    \item \textbf{$\mathcal{XY}$-Oracle Strategy} \cite{soare2014best}: This assumes that $\boldsymbol{\theta}^{*}$ is known to the learner and is used as a lower bound on the sample complexity. 

    \item \textbf{$\mathcal{XY}$-Adaptive Strategy} \cite{soare2014best}: It is a semi-adaptive algorithm that divides the total number of rounds into phases, and eliminates directions that do not contribute to reducing the uncertainty in estimating the parameter vector $\boldsymbol{\theta}^{*}$ at the end of each phase. Within each phase, a fixed allocation strategy is employed.
    \item \textbf{LinGapE} \cite{xu2018fully}: It is a fully adaptive approach that adjusts the arm selection based on the rewards observed in the previous round. This work is limited to the single-agent setting and serves as the main algorithm on which our work is based, extending it to the multi-agent case.
\end{enumerate}
All results are averaged over 30 algorithm runs. The confidence $\delta_m$ is set to $0.05$.
\subsection{Simulation on Synthetic Data}
Here we use the benchmark example from \cite{soare2014best}, commonly used in BAI in LB. This example consists of $d+1$ arms, where $d$ is the dimension of the context vector. The context vectors of the first $d$ arms are the canonical basis vectors, given by $\X_1 = \boldsymbol{e}_1, \dots, \X_d = \boldsymbol{e}_d$. The context vector of the last arm is $\boldsymbol{x}_{d+1}=[\cos(\phi), \sin(\phi), 0, \dots, 0]^\top$ with $\phi =0.01$.  The parameter vector is $\boldsymbol{\theta}^{*} = [2, 0, \dots, 0]^\top$ and the noise  follows $\eta_{t,m} \sim \mathcal{N}(0,1)$.  Since $\arg \max_{k \in [K]} \boldsymbol{x}_k^\top \boldsymbol{\theta}^{*}=1$, arm 1 is the optimal arm $a^{*}$ in this example. 
\begin{figure}[ht]
    \centering
    \includegraphics[width=0.9\linewidth]{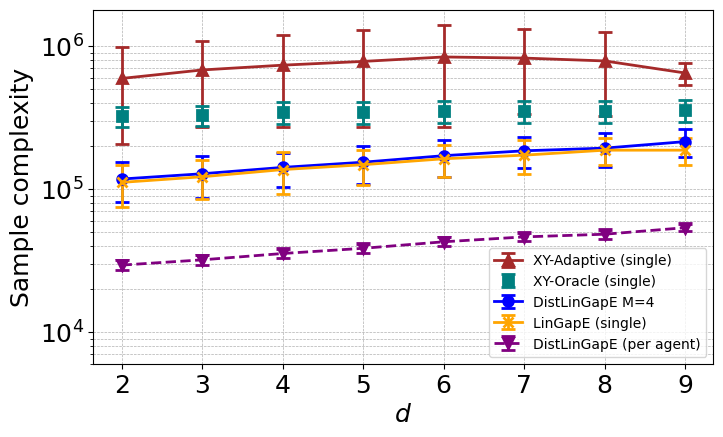}
    \caption{The sample complexity performance for different values of $d$.}
    \label{fig:eg_d_vs_complexity}
\end{figure}

Fig.~\ref{fig:eg_d_vs_complexity} shows how the sample complexity changes with the change in $d$. The figure compares the proposed DistLinGapE algorithm with $M=4$ agents with the single-agent (centralized) LinGapE, as well as the $\mathcal{XY}$-Oracle and the $\mathcal{XY}$-Adaptive algorithms. 
DistLinGapE, with $M=4$ agents, achieves a total sample complexity close to that of LinGapE, effectively reaching the optimal speedup of $M$. On average, each agent contributes an equal number of arm pulls, resulting in a per-agent sample complexity of $1/M$ of the total sample count, as shown by the dashed line. In contrast, the $\mathcal{XY}$-Adaptive algorithm, being only semi-adaptive, does not use past observations at every time step. Consequently, it performs worse than fully adaptive algorithms, requiring a larger number of samples to achieve the same confidence level specified as $\delta = M \delta_m = 0.2$. Despite knowing $\boldsymbol{\theta}^{*}$,  $\mathcal{XY}$-Oracle has higher sample complexity than DistLinGapE because DistLinGapE has tighter confidence bounds, which enable more accurate reward estimates \cite{xu2018fully}.
\begin{table}[ht]
    \caption{Number of Samples per Arm for $d=5$.}
    \centering
    \rowcolors{2}{gray!20}{white}
    \begin{tabular}{|c|c|c|c|c|} \hline
     \textbf{Arm }   & \textbf{DistLinGapE} & \textbf{LinGapE} & \textbf{$\mathcal{XY}$-Adaptive} & \textbf{$\mathcal{XY}$-Oracle}  \\ \hline
     Arm 1  &  794.36    & 727.77     & 3898.20   &   1623.73   \\
      Arm 2  &  153060.47 & 147117.53  & 776540.00 & 340027.27 \\
     Arm 3  &  34.07     & 27.10       & 30.27  &   1618.00   \\
     Arm 4  &  40.43     &  25.47      & 18.30  &  1620.10   \\
     Arm 5  &  33.10     &  28.90      & 20.73  &  1604.80   \\
     Arm 6  &  8.47      &  4.97       & 20.53   &  1616.67 \\ \hline
    \end{tabular}
    \label{tab:pull_per_arm}
\end{table}

Table~\ref{tab:pull_per_arm} shows the number of samples per arm for $d=5$. The DistLinGapE column represents the number of samples across all four agents. Arm 2 is pulled significantly more than any other arm, including the optimal arm 1.
This occurs because, for small values of $\phi$, $0 < \phi \ll 1$, arm $d+1$ is the strongest competitor to arm 1, with $\Delta_{\min} = (\boldsymbol{x}_1 - \boldsymbol{x}_{d+1})^{\top} \boldsymbol{\theta}^{*}$. To accurately identify arm 1 as the best, the uncertainty in the direction $\boldsymbol{y} = (\boldsymbol{x}_1 - \boldsymbol{x}_{d+1})$ must be minimized. Notably, arm 2 is nearly aligned with this direction, making it the most effective choice for reducing the uncertainty.
The number of samples per agent in DistLinGapE is $1/M$ of the table value. In the service placement problem, this suggests that the server may repeatedly deploy a suboptimal service at the SBSs to confidently identify the best one.

\subsection{Simulation on the Small Cell Network}
We consider a network with a service provider with $K=10$ services, which can be deployed either at the cloud or the SBSs. Once the optimal service deployment is identified, it remains fixed for an extended period compared to task scheduling. The context vector for each arm $\boldsymbol{x}_k$ is an $8$-dimensional vector in this example, obtained from historical information about the the average delay improvement of a service and the long-term average demand over a four minute period.
To model the demand vectors in our simulations, the base demand of each service follows a Zipf distribution and is perturbed by multiplicative log-normal noise to represent demand fluctuations across different time periods. Fig.~\ref{fig:demand_vectors} shows the normalized average service demand for the 10 services across 8 time periods.
%
\begin{figure}[ht]
    \centering
    \includegraphics[width=0.9\linewidth]{ 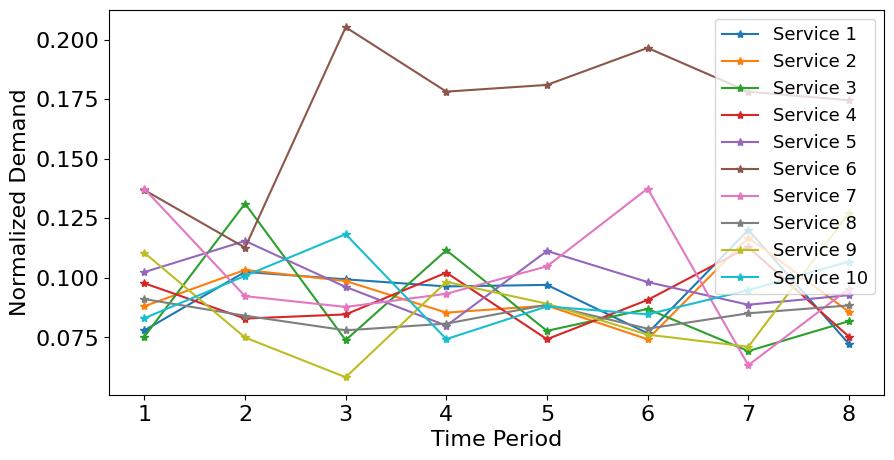}
    \caption{The normalized demand for the $K=10$ services over 8 periods of time.}
    \label{fig:demand_vectors}
\end{figure}

To study the performance of the network under different numbers of collaborating SBSs, we consider a network consisting of six small cells, each with a radius of $200$~m and an SBS located at its center. The cells are arranged in a circular layout such that their edges touch, forming an outer circle with a radius of $600$~m. An MBS is positioned at the center of this outer circle. The value of $M$ determines the number of SBSs collaborating in the learning process. Each small cell contains 75 users distributed uniformly at random within its coverage area. The main simulation parameters are summarized in Table~\ref{tab:simulation_parameters}, where parameters given as $[a,b]$ are sampled from the uniform distribution $\mathcal{U}[a,b]$.

The uplink channel gain $h_{t,m}$ in \eqref{eq:datarate} is defined as the average gain across users. In the simulation, the channel gain at time $t$ for user $u$ in SBS $m$ is given by
$h_{t,m,u} = \alpha_{t,m,u}^2 \cdot g_m \cdot \left({l_{\text{ref}}}/{l_{t,m,u}}\right)^\nu$,
where $\alpha_{t,m,u}^2$ represents the small-scale Rayleigh fading power gain, modeled as an exponential random variable with unit mean, and $g_m \cdot \left({l_{\text{ref}}}/{l_{t,m,u}}\right)^\nu$ models the path loss. Here, $l_{\text{ref}}$ is the reference distance $1$~m, $l_{t,m,u}$ is the distance between user $u$ and SBS $m$, $g_m=-30$ dB is the path-loss coefficient, and $\nu = 2.5$ is the path-loss exponent.
Similarly, the uplink channel between user $u$ and the MBS is modeled as
$\alpha_{t,0,u}^2 \cdot g_0 \cdot \left({l_{\text{ref}}}/{l_{t,0,u}}\right)^\nu$,
where $g_0 = -40$~dB, $\alpha_{t,0,u}^2$ models the small-scale Rayleigh fading power and follows an exponential distribution with mean $1$, and $l_{t,0,u}$ is the distance between user $u$ and the MBS.

\begin{table}[ht]
    \caption{Simulation parameters}
    \centering
    \begin{tabular}{|c|c|c|} \hline
        Parameter   & Description         &  Value \\ \hline \hline
        W           & Channel bandwidth   & 10 MHz  \\
        P           & Transmission power   & 10 dBm \\
        $\sigma_{\text{N}}^2$ & Noise power & $-104$~dBm \\
        $I$         &  Interference power    & $-90$~dBm \\
        $\rho^{\text{b}}$ & Backhaul data rate & $[1,4]$~GHz \\
        $RRT_{t,m}$      & Round trip time to SBS     & $[2,7]$~ms \\  
        $RRT_{t,0}$      & Round trip time to cloud     & $[20,40]$~ms \\  
        $s_k$      & task size     & $[0.5,1]$~MB \\  
        $f_{t,k,0}$  & Cloud CPU frequency   & $[4.6,5.6]$~GHz \\  
        $f_{t,k,m}$  & SBS CPU frequency   & $[2.3, 3.2]$~GHz \\  
        $c$  & Processing cycles/bit   & $100$ \\ \hline 
    \end{tabular}
    \label{tab:simulation_parameters}
\end{table}
\begin{figure}
    \centering
    \includegraphics[width=0.8\linewidth]{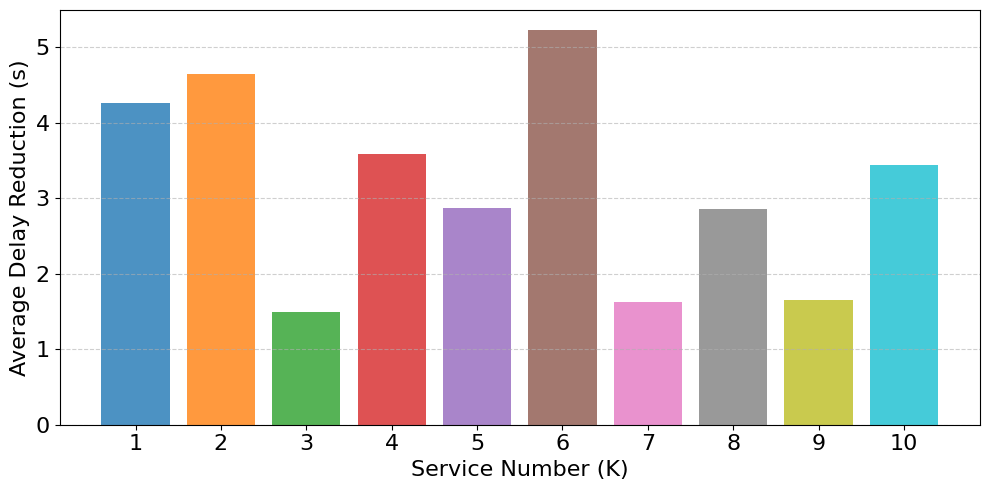}
    \caption{The average delay reduction for each service $k$ for a network with $K=10$ and $M=4$.}
    \label{fig:delays}
\end{figure}

Fig.~\ref{fig:delays} shows the average delay reduction achieved by placing service $k$ at the SBSs in a four SBS network, given that there are 75 users in each coverage area $m$. In this example, service $6$ is the best. The DistLinGapE algorithm correctly identifies service $6$ as the best with zero error, despite the unknown service demand and the randomness of the environment.

Table~\ref{tab:speedup_K10} shows the effectiveness of collaborative learning in achieving near-optimal speedup. For each value $M$, we adjust the value of the communication threshold $D$ as shown in the table to achieve a near-optimal speedup. The sample complexity of the centralized algorithm for a single SBS is $T_{\mathcal{O}} = 64631.15$ samples, which represents the minimum number of samples required to achieve the desired confidence level. If $M$ SBSs learn independently, the total sample complexity would be $M T_{\mathcal{O}}$. Whereas the sample complexity of a centralized $\mathcal{XY}$-Oracle algorithm is $542901.40$ samples. It is higher than that of the DistLinGapE, despite knowing  $\boldsymbol{\theta}^*$, because the DistLinGapE has a tighter confidence bound, allowing for more accurate reward estimates. 

When the SBSs collaborate, the DistLinGapE achieves a near-optimal speedup $M$. Table~\ref{tab:speedup_K10} shows the value of the communication threshold $D$ and the number of communication rounds needed to achieve $\mathcal{S}_{\mathcal{A}}$ speedup in the simulation example with $M$ collaborating agents.
\renewcommand{\arraystretch}{1.5}
\begin{table}[ht] 
   \caption{Algorithm Speedup}
    \centering
    \begin{tabular}{|c|c|c|c|c|} \hline
      M              &    1       & 2          & 4         & 6      \\ \hline
      D              & $\infty$   &  10        & 1         & 0.1  \\ \hline
      samples/agent  & 64631.12   &  30052.02  &  17051.46   & 10914.82 \\ \hline
      $S_\mathcal{A}$&    -       &   1.86     &  3.79     & 5.92    \\ \hline
      Communication rounds& 0     &  86.40     & 128.93     & 342.65   \\ \hline
    \end{tabular}\label{tab:speedup_K10}
\end{table}

Fig.~\ref{fig:D_vs_comm_samples} illustrates the impact of the communication threshold, $D$, on the number of communication rounds performed by an agent and the total sample complexity required to achieve the desired confidence. The results correspond to a bandit instance with $K=10$ services and $M=4$ SBSs. Small values of $D$ lead to excessive communication overhead, as agents exchange information too frequently, without improving the sample complexity. Conversely, selecting a large $D$ increases the total sample complexity, as each agent performs a greater number of local updates before sharing information. This trade-off highlights the importance of carefully tuning $D$ to balance communication efficiency and sample complexity.
\begin{figure}[ht]
    \centering
    \includegraphics[width=0.8\linewidth]{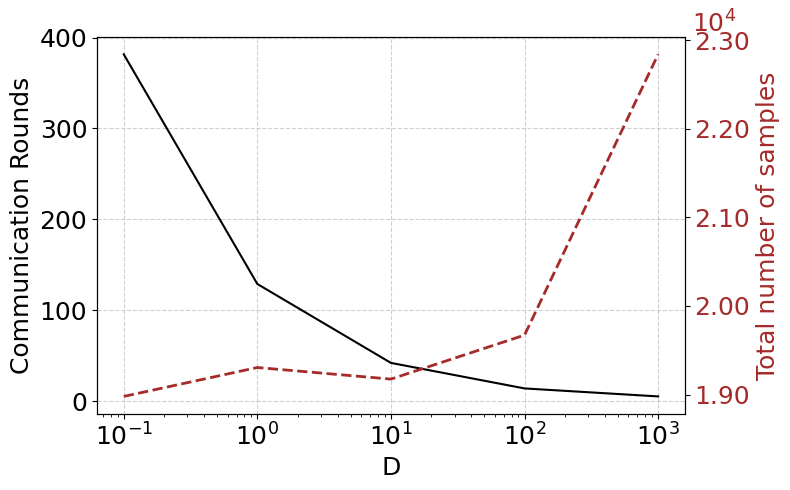}
    \caption{The communication threshold $D$ vs. the number of communication rounds and the total sample complexity for a four-SBS network with $K=10$ services.}
    \label{fig:D_vs_comm_samples}
\end{figure}

Fig.~\ref{fig:revision_Kvar} compares the sampling complexity and the number of communication rounds for a system with $M=4$,
$K \in \{10,20,30,40,50\}$, and 300 users per SBS. It shows that the sampling complexity is mainly affected by the gaps between the arms rather than by the number of arms itself. This explains why, in the figure, the sampling complexity at $K=50$ is significantly lower than that at $K=40$, for example. This behavior is consistent with the sampling complexity bound derived in
Theorem~\ref{th:sample_complexity}, where the dependence on $K$ appears only logarithmically, while the dominant factor is the set of arm gaps that determine the problem hardness $H_{\epsilon}$.  
We note that, in this example, we increased the number of users per SBS from 75 to 300. With 75 users, the aggregated delay improvement over all users was too small, resulting in very slow confidence shrinkage and very long sampling times.

\begin{figure}[ht]
    \centering
    \includegraphics[width=0.8\linewidth]{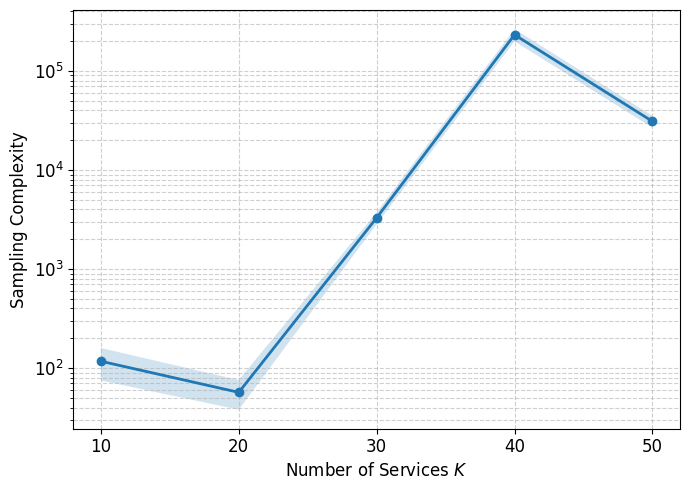}
    \caption{The sampling complexity and number of communication of the DistLinGapE for different number of services (arms).}
    \label{fig:revision_Kvar}
\end{figure}

Fig.~\ref{fig:delay_OFUL} and Fig.~\ref{fig:delay_OFUL_round} show the learning behavior of DistLinGapE and centralized batch OFUL. For a fair comparison, since DistLinGapE samples 
$M$ arms in each global round, a centralized 
$M$-batch OFUL is used to select the 
$M$ arms with the highest indices. Both algorithms are run for 400 rounds with $D=100$ for DistLinGapE. Fig.~\ref{fig:delay_OFUL} shows that, under the same number of rounds and parallel samples, DistLinGapE achieves a higher cumulative delay reduction, indicating more effective sampling. Fig.~\ref{fig:delay_OFUL_round} explains this behavior by presenting the average delay improvement per round. OFUL initially has a lower average delay improvement as it explores less certain services (least explored) based on its optimism sampling principle, whereas DistLinGapE concentrates sampling on services that are hardest to distinguish from the optimal one, enabling earlier sampling of delay efficient services.

\begin{figure}
    \centering
    \includegraphics[width=0.8\linewidth]{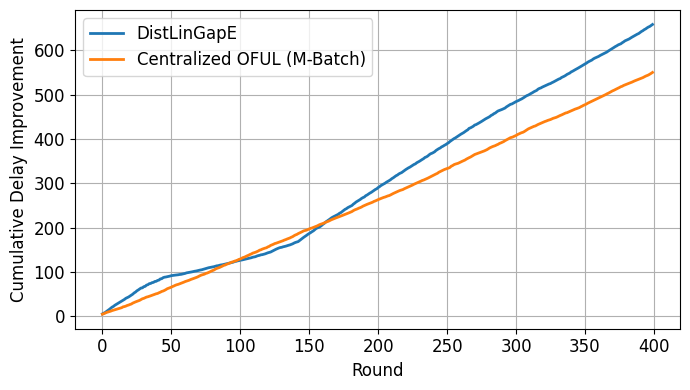}
    \caption{The cumulative delay improvement for the DistLinGapE with $M=4$ SBSs and centralized OFUL with $M$-batch learning. }
    \label{fig:delay_OFUL}
\end{figure}

\begin{figure}
    \centering
    \includegraphics[width=0.8\linewidth]{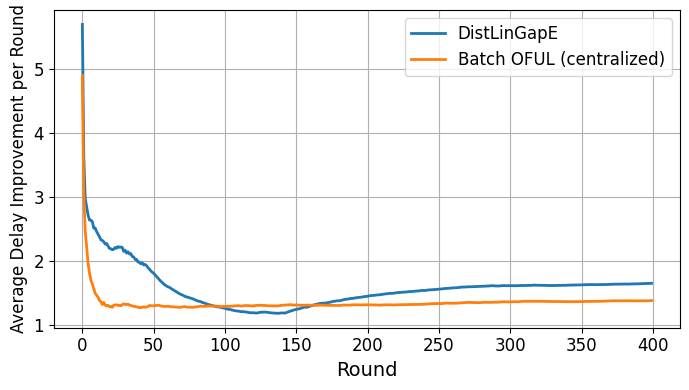}
    \caption{The average delay improvement per round for the DistLinGapE with $M=4$ SBSs and centralized OFUL with $M$-batch learning. }
    \label{fig:delay_OFUL_round}
\end{figure}

\section{Conclusion and Future Work}
\label{sec:conclusion}
This work addresses the problem of service placement in small cell networks, considering the unknown service demand and the randomness of the environment. Given the limited resources at the SBSs, the goal was to identify the service that would result in the highest reduction in the total user delay when deployed at the SBSs instead of the cloud. To achieve this, we modeled the service demand as a linear function of service attributes and formulated the service placement problem as a distributed BAI problem.  
We then proposed a novel distributed multi-agent BAI algorithm for the fixed-confidence setting under the linear bandit framework. Our numerical analysis demonstrated the efficacy of the proposed approach in finding the optimal arm (service) and that collaboration between agents (SBSs) enables near-optimal speedup, reducing the sample complexity per agent by a factor of $M$ compared to the case where the agents learn independently. Additionally, we derived upper bounds on both the sample complexity and the number of communication rounds required for the proposed algorithm.  
To the best of our knowledge, this work is the first to apply the BAI variant of MAB to the service placement problem. However, several directions remain for future research. One potential extension is to consider the top-$m$ arms setting, where the algorithm identifies the best $m$ arms instead of just one. This would allow service providers to deploy multiple services at SBSs, providing a more practical and flexible solution. Another promising direction is to extend this work to a heterogeneous settings with different contexts across agents. Furthermore, the overlap in coverage regions introduces coupling of requests, which must be addressed in future algorithms. Beyond algorithm design, theoretical guarantees, particularly upper bounds on sample complexity and communication rounds, need to be derived for these cases.

\appendix
%
%




\appendices
\section{Auxiliary Lemmas}
In this section, we present the lemmas necessary for proving the theorems stated in the paper.
\begin{lemma}
\label{lem:good_event}
    Let $\Delta(i,j) =(\boldsymbol{x}_i - \boldsymbol{x}_j)^{\top} \boldsymbol{\theta}^{*}$ be the gap of the expected rewards between arms $i$ and $j$, and define a good event $\mathcal{E}_m$ for agent $m$ as follows \cite{xu2018fully}
    \begin{equation}
        \mathcal{E}_m = \left\{\forall t >0, \forall i,j \in [K], \lvert \Delta(i,j) - \hat{\Delta}_{t,m} (i,j) \rvert \le \beta_{t,m}(i,j) \right\}.
    \label{eq:event}
    \end{equation}
    The event $\mathcal{E}_m$ occurs with probability at least $1-M\delta_m$.
\end{lemma}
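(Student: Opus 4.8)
The plan is to recognize that the event $\mathcal{E}_m$ is nothing more than inequality \eqref{eq:reward_diff} of Proposition~\ref{prop:confidence_ellipsoid}, instantiated at the directions $\boldsymbol{x} = \boldsymbol{x}_i - \boldsymbol{x}_j$ and made uniform over $t$ and over all pairs $(i,j)$. Indeed, since $\hat{\Delta}_{t,m}(i,j) = (\boldsymbol{x}_i - \boldsymbol{x}_j)^\top \hat{\boldsymbol{\theta}}_{t,m}$ and $\Delta(i,j) = (\boldsymbol{x}_i - \boldsymbol{x}_j)^\top \boldsymbol{\theta}^*$, the quantity being controlled satisfies $\Delta(i,j) - \hat{\Delta}_{t,m}(i,j) = (\boldsymbol{x}_i - \boldsymbol{x}_j)^\top (\boldsymbol{\theta}^* - \hat{\boldsymbol{\theta}}_{t,m})$, while $\beta_{t,m}(i,j) = \| \boldsymbol{x}_i - \boldsymbol{x}_j \|_{\boldsymbol{A}_{t,m}^{-1}} C_{t,m}$ is exactly the right-hand side of \eqref{eq:reward_diff} for this choice of direction. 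So the task reduces to showing that the confidence bound \eqref{eq:reward_diff} holds simultaneously for every such direction and every round, with the claimed probability.

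First I would make explicit that \eqref{eq:reward_diff} is a deterministic consequence of the membership $\boldsymbol{\theta}^* \in \mathscr{C}_{t,m}$: by the Cauchy--Schwarz inequality in the $\boldsymbol{A}_{t,m}$-weighted inner product,
\begin{equation*}
    \lvert \boldsymbol{x}^\top (\hat{\boldsymbol{\theta}}_{t,m} - \boldsymbol{\theta}^*) \rvert \le \| \boldsymbol{x} \|_{\boldsymbol{A}_{t,m}^{-1}} \, \| \hat{\boldsymbol{\theta}}_{t,m} - \boldsymbol{\theta}^* \|_{\boldsymbol{A}_{t,m}},
\end{equation*}
and on the event $\boldsymbol{\theta}^* \in \mathscr{C}_{t,m}$ the second factor is at most $C_{t,m}$ by \eqref{eq:conf_ellipsoid}. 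Crucially, once this single ellipsoid-membership event holds at round $t$, the bound is valid for every fixed direction $\boldsymbol{x}$, hence simultaneously for all the finitely many pairs $(i,j) \in [K]^2$ with no further loss; and since the confidence ellipsoid of Proposition~\ref{prop:confidence_ellipsoid} is constructed to be valid for all $t \in \{1,2,\dots\}$ at once, uniformity over $t$ is inherited as well.

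It then remains only to attach the probability. I would invoke Lemma~\ref{lemma:error_prob} directly: $\boldsymbol{\theta}^* \in \mathscr{C}_{t,m}$ for all $t$ holds with probability at least $1 - M\delta_m$, the factor $M$ arising from the union bound over the $M$ agents whose samples are aggregated into $\boldsymbol{A}_{t,m}$ through the coordinator. Intersecting this with the deterministic implication of the previous step yields $\mathbb{P}[\mathcal{E}_m] \ge 1 - M\delta_m$, which is the claim.

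The step requiring the most care is the passage from the single-agent self-normalized bound underlying Proposition~\ref{prop:confidence_ellipsoid} to its use on $\boldsymbol{A}_{t,m}$, which mixes pulls from all $M$ agents: one must check that the combined sequence of context--noise pairs forms a martingale difference sequence with respect to a common filtration, so that the bound of \cite[Theorem 2]{abbasi2011improved} applies, with the $M$ distinct noise streams controlled jointly via the union bound already recorded in Lemma~\ref{lemma:error_prob}. Everything else---the Cauchy--Schwarz estimate, the reduction to finitely many directions, and the inheritance of uniformity in $t$---is routine.
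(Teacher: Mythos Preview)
Your proposal is correct and follows essentially the same route as the paper: the paper's proof is the one-line remark that the lemma ``results from \eqref{eq:reward_diff} in Proposition~\ref{prop:confidence_ellipsoid} followed by the union bound,'' and you have simply unpacked this---making explicit the Cauchy--Schwarz step behind \eqref{eq:reward_diff}, the fact that a single ellipsoid-membership event covers all directions and all $t$, and the union bound over agents (which you route through Lemma~\ref{lemma:error_prob}). Your added caveat about the martingale structure of the aggregated sequence is a legitimate point of care that the paper leaves implicit, but it does not change the approach.
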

\begin{proof}
    This lemma results from \eqref{eq:reward_diff} in Proposition~\ref{prop:confidence_ellipsoid} followed by the union bound.
\end{proof}

The following lemma is based on \cite[Lemma 5]{xu2018fully}.
\begin{lemma}\label{lem:bound_B} 
    Under event $\mathcal{E}_m$ at agent $m$, $B_m(t)$ is bounded by 
    \begin{align}
        B_m(t) \le  & \min \left(0, -\max\left(\Delta_{i_{t,m}}, \Delta_{j_{t,m}} \right) +\beta_{t,m}(i_{t,m}, j_{t,m})  \right) \nonumber \\ 
        &+ \beta_{t,m} (i_{t,m}, j_{t,m}),
    \end{align}
    if either $i_{t,m}$ or $j_{t,m}$ is the best arm. Otherwise,
    \begin{align}
        B_m(t) \le & \min\left(0, -\max\left(\Delta_{i_{t,m}}, \Delta_{j_{t,m}} \right) + 2 \beta_{t,m}(i_{t,m}, j_{t,m})\right) \nonumber \\ &+ \beta_{t,m} (i_{t,m}, j_{t,m}).
    \end{align}
\end{lemma}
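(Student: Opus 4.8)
The plan is to derive the bound under the good event $\mathcal{E}_m$ from three elementary inequalities and then match the two claimed cases by a short case analysis. Throughout I would abbreviate $i = i_{t,m}$, $j = j_{t,m}$, and $\beta = \beta_{t,m}(i,j) = \beta_{t,m}(j,i)$ (the width is symmetric in its arguments), and recall $\Delta_i = (\boldsymbol{x}_{a^{*}} - \boldsymbol{x}_i)^{\top}\boldsymbol{\theta}^{*} \ge 0$ with $\Delta_{a^{*}} = 0$. First I would record the identity $\Delta(j,i) = (\boldsymbol{x}_j - \boldsymbol{x}_i)^{\top}\boldsymbol{\theta}^{*} = \Delta_i - \Delta_j$, obtained by inserting and removing $\boldsymbol{x}_{a^{*}}^{\top}\boldsymbol{\theta}^{*}$.

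Next I would establish the three ingredients. \textbf{(a)} Since $i$ maximizes the estimated reward $\boldsymbol{x}_k^{\top}\hat{\boldsymbol{\theta}}_{t,m}$, we have $\hat{\Delta}_{t,m}(j,i) \le 0$, hence $B_m(t) \le \beta$; this is precisely the ``$0$''-branch of the claimed minimum after adding $\beta$. \textbf{(b)} Applying $\mathcal{E}_m$ to the pair $(j,i)$ gives $\hat{\Delta}_{t,m}(j,i) \le \Delta(j,i) + \beta$, so $B_m(t) \le \Delta(j,i) + 2\beta = \Delta_i - \Delta_j + 2\beta$. \textbf{(c)} The key step is $\Delta_i \le \beta$: because $j$ maximizes $\hat{\Delta}_{t,m}(\cdot,i) + \beta_{t,m}(\cdot,i)$, evaluating this objective at $a^{*}$ yields $B_m(t) \ge \hat{\Delta}_{t,m}(a^{*},i) + \beta_{t,m}(a^{*},i)$, and $\mathcal{E}_m$ on the pair $(a^{*},i)$ gives $\hat{\Delta}_{t,m}(a^{*},i) \ge \Delta_i - \beta_{t,m}(a^{*},i)$, so that $B_m(t) \ge \Delta_i$. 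Combined with (a), this forces $\Delta_i \le B_m(t) \le \beta$.

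With (a)--(c) in hand I would finish by case analysis. If $i = a^{*}$ then $\Delta_i = 0$ and (b) gives $B_m(t) \le -\Delta_j + 2\beta = -\max(\Delta_i,\Delta_j) + 2\beta$; taking the minimum with (a) gives the first bound. If $j = a^{*}$ with $i \ne a^{*}$, then $\Delta_j = 0$, $\max(\Delta_i,\Delta_j) = \Delta_i$, and (a) together with (c) gives $B_m(t) \le \beta \le -\Delta_i + 2\beta$, again the first bound. When neither $i$ nor $j$ is optimal, I would split on the larger gap: if $\Delta_j \ge \Delta_i$, then (b) and (c) give $B_m(t) \le \Delta_i - \Delta_j + 2\beta \le -\Delta_j + 3\beta = -\max(\Delta_i,\Delta_j) + 3\beta$; if $\Delta_i > \Delta_j$, then (a) and (c) give $B_m(t) \le \beta \le -\Delta_i + 3\beta$. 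In every sub-case $B_m(t) \le \beta$ also holds by (a), so intersecting produces the stated $\min(0,\cdot) + \beta$ form with coefficient $\beta$ in the first case and $2\beta$ in the second.

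The hard part is step (c): the inequality $\Delta_i \le \beta$ is exactly what allows the branch $B_m(t) \le \beta$ to stand in for a missing gap term and accounts for the different $\beta$-coefficients inside the two minima ($\beta$ when an endpoint is optimal versus $2\beta$ otherwise). It is easy to overlook that this bound comes from invoking the optimality of $a^{*}$ \emph{inside} the argmax defining $j$, rather than from any property of $j$ itself; once it is available, the remainder is bookkeeping to ensure both arguments of each $\min$ and $\max$ are matched correctly.
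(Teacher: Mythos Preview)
Your proof is correct. The paper itself does not supply a proof of this lemma; it simply cites \cite[Lemma~5]{xu2018fully} and states the result, so there is no in-paper argument to compare against. Your three ingredients (a)--(c) and the ensuing case split are exactly the structure of the original LinGapE argument, with the key observation being (c): plugging $a^{*}$ into the $\arg\max$ defining $j_{t,m}$ and invoking $\mathcal{E}_m$ yields $B_m(t)\ge\Delta_i$, which together with (a) gives $\Delta_i\le\beta$. Every case then closes as you wrote; in particular, when neither $i$ nor $j$ is optimal and $\Delta_i>\Delta_j$, the chain $B_m(t)\le\beta\le 2\beta\le -\Delta_i+3\beta$ (using $\Delta_i\le\beta$) delivers the second branch of the minimum.
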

Lemma~\ref{lem:T_mt} follows directly from \cite[Lemma 2]{xu2018fully}, adapted to the multi-agent setting by accounting for the number of times an arm is pulled, including the cumulative pulls by other agents in previous rounds as reported by the central coordinator.
\begin{lemma}\label{lem:T_mt}
    Let $T_{m,i}(t)$ be the number of times  arm $i$ at agent $m$ is sampled before round $t$, including the samples by other agents as reported by the central coordinator in the most recent communication round. Then, the norm $\|\X_i - \X_j \|_{\boldsymbol{A}_{t,m}^{-1}}$ is bounded by
    \begin{equation}
        \|\X_i - \X_j \|_{\boldsymbol{A}_{t,m}^{-1}} \le \sqrt{\frac{\alpha(i,j)}{T_{m,i,j} (t)}},
    \end{equation}
    where 
    \begin{equation}
        T_{m,i,j}(t) = \underset{k \in [K]:p^{*}_k(i,j)>0}{\min}T_{m,k}(t)/ p^{*}_k (i,j).
    \end{equation}
\end{lemma}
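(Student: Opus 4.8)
The plan is to reduce the statement to the single-agent bound of \cite[Lemma 2]{xu2018fully} by first verifying that the collaborative design matrix decomposes exactly as in the single-agent case. Concretely, I would begin by observing that, by the bookkeeping in Algorithm~\ref{alg:DistLinGapE}, the matrix the agent actually uses satisfies $\boldsymbol{A}_{t,m} = \lambda\boldsymbol{I} + \sum_{k=1}^{K} T_{m,k}(t)\,\X_k\X_k^\top$, where $T_{m,k}(t)$ counts every pull of arm $k$ known to agent $m$ — both the pulls aggregated in $\Ast{t}$ (reported by the coordinator for all agents up to the last communication round) and the local pulls accumulated in $\Delta\boldsymbol{A}_{t,m}$ since then. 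This identity is exactly what makes the single-agent argument portable: the only change relative to \cite{xu2018fully} is the meaning of the counts $T_{m,k}(t)$, which now incorporate other agents' contributions. In particular $\boldsymbol{A}_{t,m}\succeq\sum_{k}T_{m,k}(t)\,\X_k\X_k^\top$ in the positive-semidefinite order, so discarding the $\lambda\boldsymbol{I}$ term only enlarges $\boldsymbol{A}_{t,m}$ and hence only shrinks $\|\cdot\|_{\boldsymbol{A}_{t,m}^{-1}}$.

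Next I would decompose the target direction $\boldsymbol{y} = \X_i - \X_j = \sum_{k=1}^{K} w_k^*(i,j)\,\X_k$ using the minimum-weight representation defining $\boldsymbol{\mathrm{w}}^*(i,j)$, and establish the core quadratic-form estimate $\|\boldsymbol{y}\|_{\boldsymbol{A}_{t,m}^{-1}}^2 \le \sum_{k}(w_k^*(i,j))^2/T_{m,k}(t)$. I would obtain this from the variational identity $\boldsymbol{y}^\top\boldsymbol{A}_{t,m}^{-1}\boldsymbol{y} = \max_{\boldsymbol{z}\neq\boldsymbol{0}}(\boldsymbol{z}^\top\boldsymbol{y})^2/(\boldsymbol{z}^\top\boldsymbol{A}_{t,m}\boldsymbol{z})$: for any $\boldsymbol{z}$, I write $\boldsymbol{z}^\top\boldsymbol{y} = \sum_k w_k^*(\boldsymbol{z}^\top\X_k)$ and lower-bound the denominator by $\boldsymbol{z}^\top\boldsymbol{A}_{t,m}\boldsymbol{z}\ge \sum_k T_{m,k}(t)(\boldsymbol{z}^\top\X_k)^2$; a Cauchy--Schwarz split of $\sum_k w_k^*(\boldsymbol{z}^\top\X_k)$ into the factors $\tfrac{w_k^*}{\sqrt{T_{m,k}(t)}}$ and $\sqrt{T_{m,k}(t)}\,(\boldsymbol{z}^\top\X_k)$ then cancels the $(\boldsymbol{z}^\top\X_k)$ terms and yields the bound uniformly in $\boldsymbol{z}$.

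Finally I would substitute the allocation guarantee $T_{m,k}(t)\ge p_k^*(i,j)\,T_{m,i,j}(t)$, which is immediate from the definition $T_{m,i,j}(t) = \min_{k:p_k^*>0}T_{m,k}(t)/p_k^*(i,j)$, into the quadratic bound and simplify using the identities $p_k^*(i,j) = |w_k^*(i,j)|/\alpha(i,j)$ and $\sum_k|w_k^*(i,j)| = \alpha(i,j)$ that relate the sampling proportions to the weights; this collapses $\sum_k (w_k^*)^2/(p_k^*\,T_{m,i,j})$ to the stated form (up to the normalization convention fixing $\alpha(i,j)$), giving $\|\X_i-\X_j\|_{\boldsymbol{A}_{t,m}^{-1}}\le\sqrt{\alpha(i,j)/T_{m,i,j}(t)}$. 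I expect the only genuine obstacle to be the first step: carefully justifying that the distributed bookkeeping keeps $T_{m,k}(t)$ perfectly synchronized with the outer products accumulated in $\boldsymbol{A}_{t,m}$ across communication rounds, since other agents' pulls occurring between communication events are deliberately excluded from both quantities. Once this consistency is established, the remaining algebra is essentially identical to the single-agent proof.
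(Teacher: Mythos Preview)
Your approach is essentially the same as the paper's: the paper does not give a standalone proof but states that the lemma follows directly from \cite[Lemma~2]{xu2018fully}, adapted to the multi-agent setting by replacing the single-agent pull counts with the counts $T_{m,k}(t)$ that now include the coordinator-reported pulls from other agents. Your first step makes exactly this reduction explicit by verifying $\boldsymbol{A}_{t,m}=\lambda\boldsymbol{I}+\sum_k T_{m,k}(t)\,\X_k\X_k^\top$, and your remaining steps simply supply the details of the single-agent argument that the paper defers entirely to \cite{xu2018fully}.
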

%
\begin{lemma}
    \textbf{The Determinant-Trace Inequality \cite[Lemma 10]{abbasi2011improved}} Suppose that $\X_{a_1}, \X_{a_2}, \dots , \X_{a_t} \in \Rd$, for $1 \le s\le t$ and $\lVert \X_{a_s} \rVert_2 \le L$. Let $\bar{\boldsymbol{A}}_t = \lambda \boldsymbol{I} + \sum_{s=1}^t \X_{a_s} \X_{a_s}^{\top}$ for some $ \lambda > 0$. Then,
    \begin{equation}
        \det(\bar{\boldsymbol{A}}_t) \le (\lambda + t L^2/d)^d.
    \end{equation}
    \label{lemma:det_trace}
\end{lemma}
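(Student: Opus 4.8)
The plan is to exploit the standard relationship between the determinant, the trace, and the eigenvalues of a symmetric positive definite matrix, and then close the gap with the arithmetic-mean--geometric-mean (AM-GM) inequality. First I would observe that $\bar{\boldsymbol{A}}_t$ is symmetric and, since $\lambda > 0$ forces $\bar{\boldsymbol{A}}_t \succeq \lambda \boldsymbol{I} \succ 0$, it is positive definite. Let $\lambda_1, \dots, \lambda_d > 0$ denote its eigenvalues. Because the determinant equals the product of the eigenvalues and the trace equals their sum, the entire argument reduces to controlling that sum.

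Next I would compute the trace directly. Using linearity of the trace together with the rank-one identity $\mathrm{tr}(\X_{a_s}\X_{a_s}^{\top}) = \norm{\X_{a_s}}_2^2$, I obtain
\begin{equation}
\mathrm{tr}(\bar{\boldsymbol{A}}_t) = \mathrm{tr}(\lambda \boldsymbol{I}) + \sum_{s=1}^{t} \norm{\X_{a_s}}_2^2 \le \lambda d + t L^2,
\end{equation}
where the final inequality invokes the hypothesis $\norm{\X_{a_s}}_2 \le L$ for every $s \in \{1,\dots,t\}$.

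Finally I would apply AM-GM to the nonnegative eigenvalues, so that the geometric mean never exceeds the arithmetic mean:
\begin{equation}
\det(\bar{\boldsymbol{A}}_t) = \prod_{i=1}^{d} \lambda_i \le \left(\frac{1}{d}\sum_{i=1}^{d}\lambda_i\right)^{d} = \left(\frac{\mathrm{tr}(\bar{\boldsymbol{A}}_t)}{d}\right)^{d} \le \left(\lambda + \frac{t L^2}{d}\right)^{d},
\end{equation}
which is exactly the claimed bound; the last step substitutes the trace estimate from the previous display.

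There is no serious obstacle here, as the result follows cleanly from AM-GM once the trace is bounded. The only points deserving a sentence of care are verifying positive definiteness (guaranteeing that all eigenvalues are nonnegative, so that AM-GM applies legitimately) and the elementary trace identity $\mathrm{tr}(\X\X^{\top}) = \norm{\X}_2^2$. Notably, none of this uses the adaptive or multi-agent structure of the bandit problem, which is precisely why the inequality is isolated as a standalone auxiliary lemma and reused later in the sample-complexity and communication-cost analyses.
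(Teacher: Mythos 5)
Your proof is correct and is essentially the standard argument: the paper itself states this lemma without proof, citing \cite[Lemma 10]{abbasi2011improved}, and the proof given there is exactly your route of bounding $\mathrm{tr}(\bar{\boldsymbol{A}}_t) \le \lambda d + tL^2$ and applying AM--GM to the eigenvalues. Nothing is missing, and your observation that the bound is agnostic to the adaptive and multi-agent structure is precisely why it can be invoked verbatim in the sample-complexity and communication-cost proofs.
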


\section{Proof of Theorem~\ref{th:opt_arm}}
\label{app:confidence}
\begin{proof}
Let arm $\hat{a}_m^{*}$ be the arm returned by agent $m$ at the stopping time $T$. If this arm is worse than $ \hat{a}^{*}$ by at least $\epsilon$, $\Delta(a^{*}, \hat{a}_m^{*}) > \epsilon$, then, by the definition of the stopping condition $B_m(T) \le \epsilon$ we get
\begin{equation}
    \Delta(a^{*}, \hat{a}_m^{*}) > \epsilon \ge B_m(T) \ge \hat{\Delta}_{T,m}(a^{*}, \hat{a}_m^{*}) +\beta_{T,m} (a^{*}, \hat{a}_m^{*}).
\end{equation}
Using \eqref{eq:event} and Lemma~\ref{lem:good_event}, the probability of returning a suboptimal arm at one of the agents is
\begin{equation}
    \mathbb{P} [\Delta(a^{*}, \hat{a}_m^{*}) > \epsilon] \le \mathbb{P} [\bar{\mathcal{E}}_m] = 1- \mathbb{P} [\mathcal{E}_m] \le M \delta_m =\delta,
\end{equation}
where $\bar{\mathcal{E}}_m$ is the complement of the event $\mathcal{E}_m$. This proves the condition in \eqref{eq:stop_cond}.
\end{proof}
\section{Proof of Theorem~\ref{th:sample_complexity}}
\label{app:sample_complexity}
\begin{proof}
Let agent $m^{*}$ be the first agent to satisfy the stopping condition $B_{m^*}(t) \le \epsilon$. Also, let $\tilde{t}$ be the last round that arm $k$ is pulled at agent $m^*$. Then, from Lemma~\ref{lem:bound_B}, and because the stopping condition is not yet met, we have
\begin{align} \label{eq:bound_on_Bm}
\min &\left(  0,-\Delta_k+2\beta_{\tilde t-1,m^*}(i_{\tilde t-1,m^*},j_{\tilde t-1,m^*}) \right)
 \nonumber \\ 
    &+\beta_{\tilde t-1,m^*}(i_{\tilde t-1,m^*},j_{\tilde t-1,m^*}) 
     \geq B_m(\tilde t-1) \geq \epsilon.
\end{align}
In Lemma~\ref{lem:bound_B}, $T_{m^*,i,j}(t)$ represents the total number of times arms $i$ and $j$ have been selected together up until time $t$. Due to collaboration, this value is not limited to the number of pulls at agent $m^*$ but includes the pulls across all agents in past communication rounds. 

By substituting the value of $\beta_{\tilde t-1,m^*} =  \|\boldsymbol{x}_{i_{\tilde t-1,m^*}} - \boldsymbol{x}_{j_{\tilde t-1,m^*}} \|_{\boldsymbol{A}_{\tilde t-1,m^*}}^{-1} C_{\tilde t-1,m^*} $ in  \eqref{eq:bound_on_Bm} and then using Lemma~\ref{lem:T_mt} we obtain
\begin{equation}
  T_{m^*,i_{\tilde t-1},{j_{\tilde t-1}}} \leq \frac{\alpha(i_{\tilde t-1,m^*},j_{\tilde t-1,m^*})}{\max\left(\epsilon, \frac{\epsilon+\Delta_{i_{\tilde t-1,m^*}}}{3},\frac{\epsilon+\Delta_{j_{\tilde t-1,m^*}}}{3}\right)^2}
  C^2_{\tilde t-1,m^*},
\end{equation}
where $C_{\tilde t-1,m^*}$ is defined in Proposition~\ref{prop:confidence_ellipsoid}.  

When arm $k$ is pulled at $\tilde t$-th round at agent $m^*$, 
\begin{equation}
T_{m^*,k}(\tilde t-1) = p^*_k(i_{\tilde t-1,m^*},j_{\tilde t-1,m^*}) T_{m^*,i_{\tilde t-1},{j_{\tilde t-1}}}(\tilde{t} -1), 
\end{equation} 
holds by definition. Therefore, the number of times arm $k$ is pulled by the final round $T$, $T_{m^*,k}(T)$, is bounded by
\begin{align}
&T_{m^*,k}(T) =  T_{m^*,k}(\tilde t-1) + 1 \nonumber \\
&=  p^*_k(i_{\tilde t-1,m^*},j_{\tilde t-1,m^*}) T_{m^*,i_{\tilde t-1}, j_{\tilde t-1}} (\tilde t-1) + 1 \nonumber\\
&\leq \max_{i,j\in [K]}  p^*_k(i,j)  T_{m^*,i_{\tilde t-1}, j_{\tilde t-1,m^*}}(\tilde t-1) + 1 \nonumber \\
&\leq \frac{p^*_k(i_{\tilde t-1,m^*},j_{\tilde t-1,m^*})\alpha(i_{\tilde t-1,m^*},j_{\tilde t-1,m^*})}
{  \max\left(\epsilon, \frac{\epsilon+\Delta_{i_{\tilde t-1}}}{3},\frac{\epsilon+\Delta_{j_{\tilde t-1}}}{3}\right)^2}
C^2_{\tilde t-1,m^*} + 1 \nonumber \\
&\leq \max_{i,j\in [K]} \frac{p^*_k(i,j)\alpha(i,j)}{\max\left(\epsilon, \frac{\epsilon+\Delta_i}{3},\frac{\epsilon+\Delta_j}{3}\right)^2}C^2_{T,m^*} + 1. 
\end{align}
The total sample complexity across all agents is the sum of the number of samples across all arms, we have $\tau = \sum_{k=1}^K T_{m^*,k}(T)$, which yields 
\begin{align}
    \tau \leq H_\epsilon C^2_{T,m^*} + K,
    \label{eq:tau_total}
\end{align}
To find the per-agent sample complexity $\tau_m$ we note that $\tau_m =\lceil \tau/M \rceil$, so $\tau_m \le \tau/M+1$. Using the upper bound in \eqref{eq:tau_total}, we have
\begin{equation}
    \tau_m \leq \Hm C^2_{T,m^*} + \mu,
    \label{eq:bound_tau_m}
\end{equation}
where $\mu = \frac{K}{M} +1$. Since the agents select arms sequentially, the sample complexity per agent is equal for all agents. Now, we find the upper bound on $C^2_{T,m^*}$. 
\\
\textbf{Bounding the Confidence Term:}

We extend the proof in \cite{xu2018fully} to the multi-agent case. From Proposition~\ref{prop:confidence_ellipsoid} we have 
\begin{equation}
    C_{T,m^*} = R\sqrt{2\log\frac{K^2\det(\boldsymbol{A}_{T,m^*})^{\frac12}\det(\lambda \boldsymbol{I})^{-\frac12}}{\delta_m}} + \lambda^{\frac12}S  .
\end{equation}
Using Lemma~\ref{lemma:det_trace}, and the fact that by the end of the algorithm at time $T$, the number of arm pulls is $\tau$, we get 
\begin{equation}
    C_{T,m^*} \le R\sqrt{2\log\frac{K^2}{\delta_m} + d \log \left(1+ \frac{M \tau_m L^2}{\lambda d} \right) } + \lambda^{\frac12}S ,
    \label{eq:conf_M_tau_m}
\end{equation}
where we used the upper limit on the number of samples $\tau \le M \tau_m$. To find an upper bound on the expression in \eqref{eq:conf_M_tau_m}, we consider two cases for $\lambda$:
\subsubsection{\textbf{Case 1}}  For $\lambda \leq \frac{2R^2}{S^2}\log\frac{K^2}{\delta_m}$.
In this case, \eqref{eq:conf_M_tau_m} can be expressed as
\begin{equation}
        C_{T,m^*} \le  2R\sqrt{2\log\frac{K^2}{\delta_m} + d\log\left(1+\frac{M \tau_m L^2}{\lambda d}\right)}.
\end{equation}
Substituting this value in \eqref{eq:bound_tau_m},
\begin{align}
    \tau_m &\leq \Hm C^2_{T,m^*} +\mu  \nonumber \\
    &\leq 4  \Hm R^2 \left(2\log\frac{K^2}{\delta_m} + d\log\left(1+\frac{M\tau_m L^2}{\lambda d}\right)\right) + \mu.
\end{align}
Let $\tau'_m$ be the parameter satisfying
\begin{equation}
    \tau_m = 4  \Hm R^2 \left(2\log\frac{K^2}{\delta_m} + d\log\left(1+\frac{M\tau'_m L^2}{\lambda d}\right)\right) + \mu,\label{eq:tau-bound}
\end{equation}
then, $\tau'_m \leq \tau_m$ holds. 

Let $N = 8 \Hm R^2\log\frac{K^2}{\delta_m} + \mu$, we have
\begin{align}
    \tau'_m \leq \tau_m & = 4 \Hm R^2d\log\left(1+\frac{M \tau'_m L^2}{\lambda d}\right)+ N \nonumber  \\
    & \leq 4  \Hm R^2 \sqrt{dL^2 M \tau'_m/\lambda} + N.
\end{align}
By solving this inequality for $\tau_m'$ we obtain
\begin{equation}
    \sqrt{\tau'_m} \leq 2\sqrt{16 H_\epsilon^2 R^4dL^2/(M\lambda) +N^2}.
\end{equation}
Let $Y = 2\sqrt{16  H_\epsilon^2 R^4dL^2/(M\lambda) +N^2}$, then, using the upper bound of $\tau'_m$ in \eqref{eq:tau-bound} yields 
\begin{equation}
    \tau_m \leq \mu + 4 \Hm R^2 \left(
    2   \log\frac{K^2}{\delta_m}  +
      d\log\left(1+\frac{Y^2 L^2}{\lambda d}\right)
    \right) .
\end{equation}
\subsubsection{\textbf{Case 2}}: For $\lambda > 4  H_\epsilon R^2L^2$:

Using the inequality $(a+b)^2 \le 2(a^2 + b^2)$, we can rewrite \eqref{eq:conf_M_tau_m} as
\begin{equation}
    C_{T,m^*} \le 2\left(2 R^2 \log\frac{K^2}{\delta_m} + \frac{M \tau_m R^2 L^2}{\lambda} + \lambda S^2 \right).
\end{equation}
Substituting this value in \eqref{eq:bound_tau_m} yields, 
\begin{align}
    \tau_m &\le  \Hm C_{T,m^*}+ K \notag \\
    & \le 2 \Hm \left(2 R^2 \log\frac{K^2}{\delta_m} + \frac{M \tau_m R^2 L^2 }{\lambda} + \lambda S^2 \right) + \mu.
\end{align}
Solving for $\tau_m$, we get
\begin{align}
    \tau_m &\le  \left( 1- \frac{2  H_\epsilon R^2 L^2}{ \lambda}\right)^{-1} \left(\frac{4 H_\epsilon R^2}{M} \log \frac{K^2}{\delta_m} +   \frac{2 H_\epsilon \lambda S^2}{M}  +\mu \right) \notag \\
     & \le 2 \left( \frac{4 H_\epsilon R^2}{M} \log\frac{ K^2}{\delta_m} +  \frac{2 H_\epsilon \lambda S^2}{M} + \mu \right).
\end{align}
\end{proof}
\section{Proof of Theorem~\ref{th:comm_cost}} 
\label{app:comm_cost}
%
\begin{proof}
In DistLinGapE, agents operate in parallel and are synchronized, meaning that they communicate with the central coordinator at the same time instants. We divide the running time into epochs, where each epoch ends with one communication round. Epoch $p$ starts immediately after the $p$-th synchronization event and ends at the $(p+1)$ synchronization event. Let $P$ denote the total number of epochs until stopping.

Let $T$ be the stopping time measured in global rounds. Let $\tau_m$ denote
the total number of local samples collected by agent $m$ up to stopping, and
let $\tau \coloneqq \sum_{m=1}^M \tau_m$ be the total sampling complexity
across all agents. For each epoch $p$ and agent $m$, let $\Delta \tau_{p,m}$ be
the number of local samples collected by agent $m$ during epoch $p$, so that
\begin{equation*}
\sum_{p=0}^{P-1} \Delta \tau_{p,m} = \tau_m,
\qquad
\sum_{p=0}^{P-1} \sum_{m=1}^M \Delta \tau_{p,m} = \tau.
\end{equation*}

Let $\boldsymbol{A}_p$ denote the aggregated regularized information matrix at the start of
epoch $p$, that is,
$
\boldsymbol{A}_p
=
\lambda \boldsymbol{I}
+
\sum_{\ell=0}^{p-1}
\sum_{m=1}^M
\Delta \boldsymbol{A}_{\ell,m}$, and let $\Delta \boldsymbol{A}_{p,m}$ be the local information matrix
accumulated by agent $m$ during epoch $p$. Hence, the aggregated matrix at the end of epoch $p$ satisfies
\begin{equation*}
    \boldsymbol{A}_{p+1} = \lambda \boldsymbol{I} +\boldsymbol{A}_p + \sum_{m=1}^M \Delta \boldsymbol{A}_{p,m}.
\end{equation*}

Fix $\gamma>0$ and define an epoch $p$ to be long if
$\max_{m\in[M]} \Delta \tau_{p,m} \ge \gamma$, and short otherwise.
Since each long epoch satisfies $\max_m \Delta\tau_{p,m}\ge \gamma$, we have the number of long epochs $N_{\text{long}}$ bounded by
\begin{equation*}
N_{\text{long}} 
\le
\frac{1}{\gamma}\sum_{p=0}^{P-1}\max_{m}\Delta\tau_{p,m}
\le
\frac{1}{\gamma}\sum_{p=0}^{P-1}\sum_{m=1}^M \Delta\tau_{p,m}
=
\frac{\tau}{\gamma}.
\end{equation*}

Now consider a short epoch $p$, for which $\Delta\tau_{p,m}<\gamma$ for all $m$.
Communication at the end of epoch $p$ is triggered by at least one agent; thus
there exists an agent $m_p$ such that
\begin{equation}
\Delta\tau_{p,m_p}\;
\log\!\frac{\det\!\big(\boldsymbol{A}_p+\Delta \boldsymbol{A}_{p,m_p}\big)}{\det(\boldsymbol{A}_p)}
\;>\; D.
\label{eq:trigger_short_epoch_nosub}
\end{equation}
Since $\Delta\tau_{p,m_p}<\gamma$, \eqref{eq:trigger_short_epoch_nosub} implies
\begin{equation}
\log\!\frac{\det\!\big(\boldsymbol{A}_p+\Delta \boldsymbol{A}_{p,m_p}\big)}{\det(\boldsymbol{A}_p)}
\;>\; \frac{D}{\Delta\tau_{p,m_p}}
\;\ge\; \frac{D}{\gamma}.
\label{eq:short_epoch_gain_local_nosub}
\end{equation}
Moreover, since $\sum_{m=1}^M \Delta \boldsymbol{A}_{p,m}\succeq \Delta \boldsymbol{A}_{p,m_p}$ and
$\boldsymbol{A}_p\succ \boldsymbol{0}$, PSD monotonicity yields
\begin{equation*}
\det(\boldsymbol{A}_{p+1})
=
\det\!\Big(\boldsymbol{A}_p+\sum_{m=1}^M \Delta \boldsymbol{A}_{p,m}\Big)
\;\ge\;
\det\!\big(\boldsymbol{A}_p+\Delta \boldsymbol{A}_{p,m_p}\big),
\end{equation*}
and hence
\begin{equation}
\log\!\frac{\det(\boldsymbol{A}_{p+1})}{\det(\boldsymbol{A}_p)}
\;\ge\;
\log\!\frac{\det\!\big(\boldsymbol{A}_p+\Delta \boldsymbol{A}_{p,m_p}\big)}{\det(\boldsymbol{A}_p)}
\;>\; \frac{D}{\gamma}.
\label{eq:short_epoch_gain_global_nosub}
\end{equation}
Therefore, each short epoch contributes at least $D/\gamma$ to the cumulative
log-determinant growth, and number of short epochs $N_{\text{short}}$
\begin{equation*}
N_{\text{short}} \cdot \frac{D}{\gamma}
<
\sum_{p=0}^{P-1} \log\!\frac{\det(\boldsymbol{A}_{p+1})}{\det(\boldsymbol{A}_p)}
=
\log\!\frac{\det(\boldsymbol{A}_P)}{\det(\boldsymbol{A}_0)}.
\end{equation*}

To bound the right hand side, note that $\boldsymbol{A}_P$ equals the aggregated information
matrix at stopping and includes $\tau$ total rank-one updates across all agents.
By the determinant--trace inequality in Lemma~\ref{lemma:det_trace}, we obtain
\begin{equation*}
\det(\boldsymbol{A}_P)
\le
\lambda^d\Big(1+\frac{\tau}{\lambda d}\Big)^d,
\end{equation*}
and since $\boldsymbol{A}_0=\lambda \boldsymbol{I}$,
\begin{equation*}
\log\!\frac{\det(\boldsymbol{A}_P)}{\det(\boldsymbol{A}_0)}
=
\log\!\frac{\det(\boldsymbol{A}_P)}{\lambda^d}
\le
d\log\!\Big(1+\frac{\tau}{\lambda d}\Big)
=
O\!\big(d\log \tau\big),
\end{equation*}
where $\lambda$ and $d$ are fixed problem parameters. Hence,
\begin{equation*}
N_{\text{short}}
\le
\frac{\gamma}{D}\; d\log\!\Big(1+\frac{\tau}{\lambda d}\Big).
\end{equation*}

We have $P=N_{\mathrm{long}}+N_{\mathrm{short}}$, so
\begin{equation*}
P
\le
\frac{\tau}{\gamma}
+
\frac{\gamma}{D}\; d\log\!\Big(1+\frac{\tau}{\lambda d}\Big).
\end{equation*}
Minimizing the right hand side over $\gamma>0$ yields
\begin{equation}
P
=
O\!\left(
\sqrt{
\frac{\tau\, d\log\!\big(1+\frac{\tau}{\lambda d}\big)}{D}
}
\right)
=
O\!\left(
\sqrt{
\frac{\tau\, d\log \tau}{D}
}
\right).
\label{eq:P_final_nosub}
\end{equation}

Finally, each synchronization event consists
of $M$ SBSs to coordinator transmission and $M$ coordinator to SBSs transmissions. Thus, the total
communication cost is $C_{\mathrm{tot}}=2MP$. Substituting \eqref{eq:P_final_nosub}
gives
\begin{equation*}
C_{\mathrm{tot}}
=
O\!\left(
M\sqrt{
\frac{\tau\, d\log \tau}{D}
}
\right).
\end{equation*}
\end{proof}



\begin{IEEEbiography}
[{\includegraphics[width=1in,height=1.25in,clip,keepaspectratio]{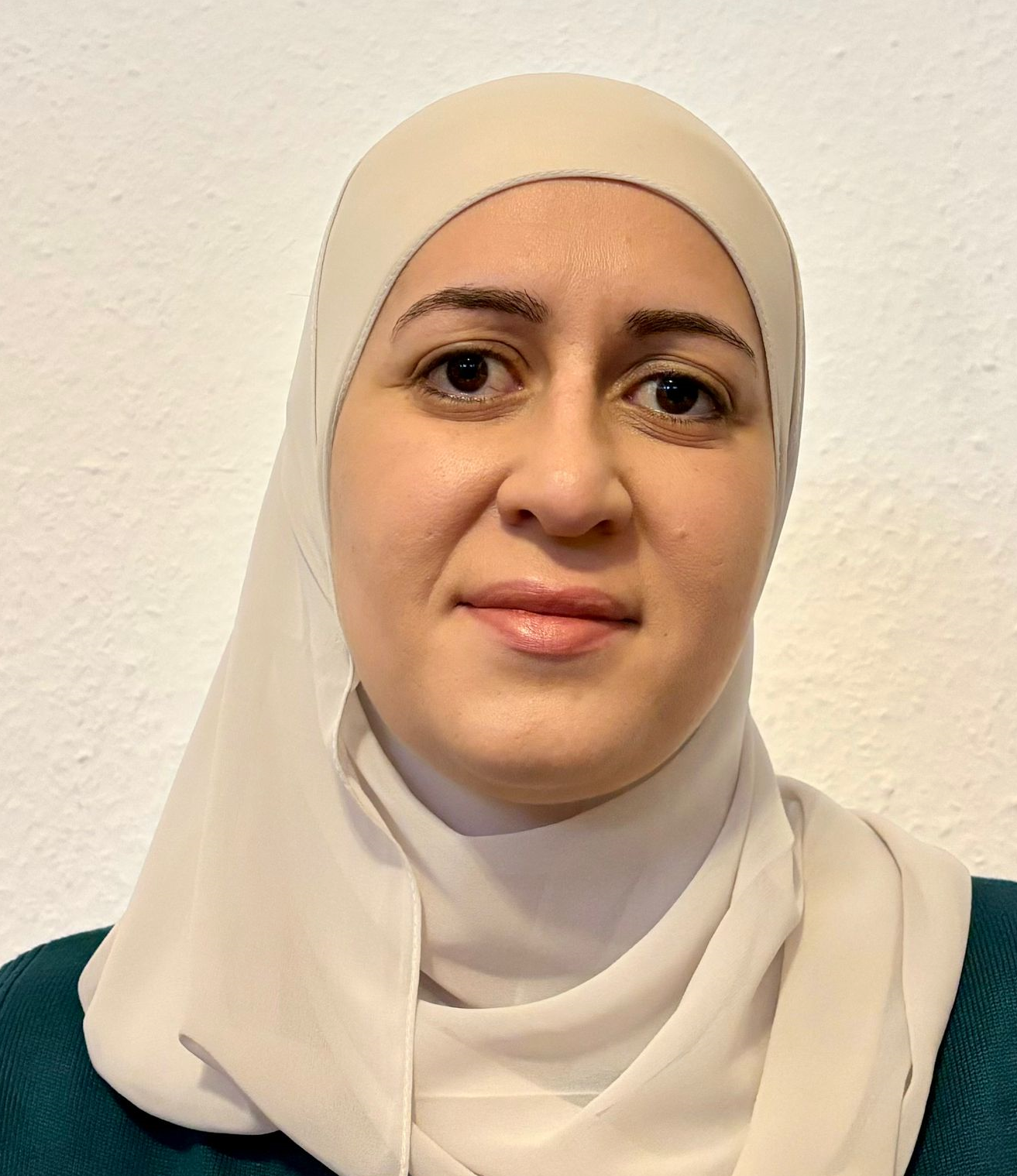}}]{Mariam Yahya} received the BSc degree in electrical engineering from Birzeit University, Palestine, the 
MSc degree in wireless communications from the Jordan University of Science and Technology, Jordan, and the PhD degree in computer science from  the University of Tübingen, Germany, in 2025. She was a lecturer with Palestine Technical University. She is currently a postdoctoral researcher with Ruhr University, Bochum, Germany. Her research interests  include wireless communications, network optimization, multi-armed bandits, and federated learning.
\end{IEEEbiography}

\begin{IEEEbiography}[{\includegraphics[width=1in,height=1.25in,clip,keepaspectratio]{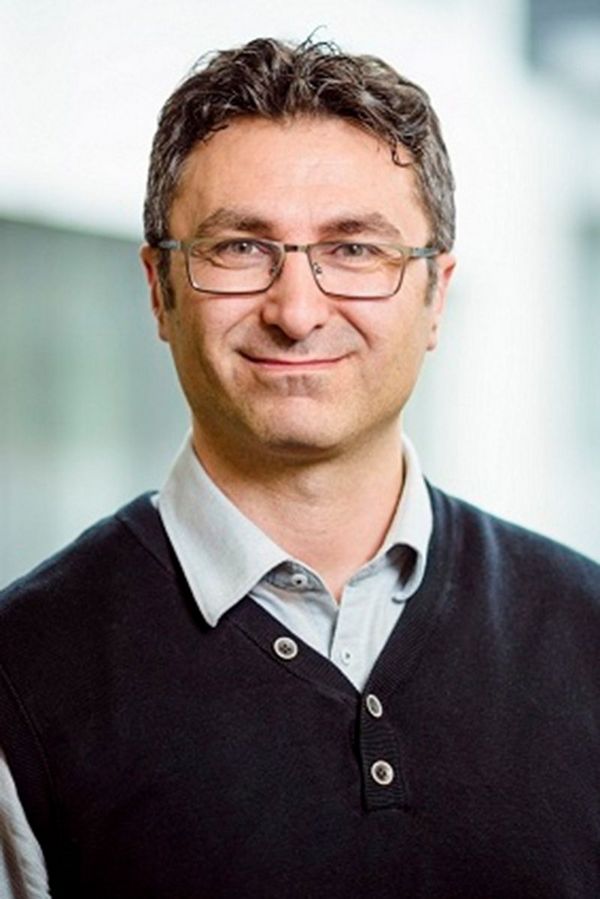}}]{Aydin Sezgin} (Senior Member, IEEE) received the Dr.-Ing degree in electrical engineering from TU Berlin in 2005. From 2006 to 2009, he was a postdoctoral with Information Systems Laboratory, Stanford University and Department of Electrical Engineering and Computer Science, University of California, Irvine. 
He is currently a professor with Ruhr University, Bochum, Germany. He has authored or coauthored 
several book chapters, more than 65 journals, and 200 conference papers.
\end{IEEEbiography}

\begin{IEEEbiography}[{\includegraphics[width=1in,height=1.25in,clip,keepaspectratio]{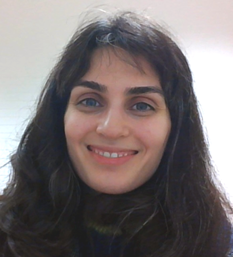}}]{Setareh Maghsudi} received the PhD degree (summa cum laude) from the Technical University of Berlin in 2015. From 2015 to 2017, she was a postdoctoral with the University of Manitoba, Canada, and Yale University, USA. From 2017 to 2023, she was an assistant professor with the Technical University of Berlin and University of Tübingen. She is currently a full professor with Ruhr-University, Bochum and senior researcher with Fraunhofer Heinrich-Hertz Institute, Berlin. Her research interests include the intersection of network analysis and optimization, game theory, machine learning, and data science. She was the recipient of several competitive fellowships, awards, and research grants from different institutes, including the 
German Research Foundation, German Ministry of Education and Research, and Japan Society for the Promotion of Science.
\end{IEEEbiography}


\end{document}